\documentclass{article}
\textwidth=17.8cm
\textheight=22.7cm
\oddsidemargin= -7mm 
\topmargin -15mm



\usepackage{graphics,epsfig}
\usepackage{amssymb}
\usepackage{graphicx}
\usepackage{amsmath}
\usepackage{amscd}
\usepackage{mathrsfs}
\usepackage{latexsym}
\usepackage{amsthm}
\usepackage{bbm}
\usepackage{graphicx}
\usepackage{sectsty}
\usepackage{appendix}
\usepackage{float}
\usepackage{url}
\usepackage{caption}

\usepackage{amsthm}





\title{A Quantum Probability Explanation in Fock Space for Borderline Contradictions}
\author{Sandro Sozzo\footnote{ss831@le.ac.uk, ssozzo@vub.ac.be} \\
School of Management, University of Leicester, Leicester, United Kingdom \\
Center Leo Apostel for Interdisciplinary Studies (CLEA) \\ 
Vrije Universiteit Brussel (VUB), Brussels, Belgium}



\begin{document}
\maketitle





\begin{abstract}
The construction of a consistent theory for structuring and representing how concepts combine and interact is one of the main challenges for the scholars involved in cognitive studies. All traditional approaches are still facing serious hindrances when dealing with `combinations of concepts' and `conceptual vagueness'. One of the main consequences of these difficulties is the existence of `borderline cases' which is hardly explainable from the point of view of classical (fuzzy set) logic and probability theory. Resting on a quantum-theoretic approach which successfully models conjunctions and disjuncions of two concepts, we propound a quantum probability model in Fock space which accords with the experimental data collected by Alxatib and Pelletier (2011) on borderline contradictions. Our model allows one to explain the occurrence of the latter contradictions in terms of genuine quantum effects, such as `contextuality', `superposition', `interference' and `emergence'. In particular, we claim that it is the specific mechanism of `emergence of a new concept' that is responsible of these deviations from classical logical thinking in the cognitive studies on human thought. This result seems to be compatible with a recent interesting application of quantum probabilistic modeling in the study of borderline vagueness (Blutner, Pothos \& Bruza, 2012), and analogies and differences with it are sketched here.

\vspace{.2cm}
\noindent
{\bf Keywords:}
Concept combinations; vagueness; borderline cases; quantum cognition; Fock space
\end{abstract}





\section{Introduction\label{intro}}
Shedding light on the mechanism and dynamics of concept combination would enhance fundamental aspects of a deeper understanding of human thought. It would give us new insight in how sentences and texts are formed by simple concept combinations and as a consequence on how meaning is carried by conceptual communication between human minds. The identification of new aspects of such a mechanism would also have an impact on various disciplines, for example psychology, linguistics, computer science and artificial intelligence. The state of affairs is however that none of the existing theories on concepts allows to identify a mechanism of `how concepts combine', that is, derive the model that represents the combination of two or more concepts from the models that represent the individual concepts. This `combination problem' was identified in a crucial way by Hampton's experiments (Hampton, 1988a,b) which measured the deviation from classical set theoretic membership weights of exemplars with respect to pairs of concepts and their conjunction or disjunction. Hampton's investigation was motivated by the so-called `Guppy effect' in concept conjunction identified by Osherson and Smith (1981). These authors considered the concepts {\it Pet} and {\it Fish} and their conjunction {\it Pet-Fish}, and observed that, while an exemplar such as {\it Guppy} is a very typical example of {\it Pet-Fish}, it is neither a very typical example of {\it Pet} nor of {\it Fish}. Hence, the typicality of a specific exemplar with respect to the conjunction of concepts shows an unexpected behavior from the point of view of classical set and probability theory. As a result of the work of Osherson and Smith (1981), the problem is often referred to as the `Pet-Fish problem' and the effect is usually called the `Guppy effect'. Hampton identified a Guppy-like effect for the membership weights of exemplars with respect to pairs of concepts and their conjunction (Hampton, 1988a), and equally so for the membership weights of exemplars with respect to pairs of concepts and their disjunction (Hampton, 1988b). Several experiments have since been conducted (Hampton, 2001), and many elements have been taken into consideration with respect to the Pet-Fish problem, to provide a satisfactory mathematical model of concept combinations. In particular, we refer to the fuzzy set based attempts (Osherson \& Smith, 1982; Zadeh, 1965; Zadeh, 1982), concerning the `Guppy effect', and to the explanation based theories (Fodor, 1994; Komatsu, 1992; Rips, 1995), concerning concept combinations. However no mechanism and/or procedure has as yet been identified that gives rise to a satisfactory description or explanation of the effects appearing when concepts combine. The combination problem is considered so serious that sometimes it is stated that not much progress is possible in the field if no light is shed on this problem (Fodor, 1994; Hampton, 1997; Kamp \& Partee, 1995; Rips, 1995).

Directly connected with the problem of conceptual vagueness and graded membership (Osherson \& Smith, 1997) are the so-called `borderline contradictions' (Bonini, Osherson, Viale \& Williamson, 1999; Alxatib \& Pelletier, 2011). Roughly speaking, a borderline contradiction is a sentence of the form $P(x) \land \lnot P(x)$, for a vague predicate $P$ and a borderline case $x$. For example, the sentence ``Mark is rich and Mark is not rich'' constitutes a borderline contradiction. Several studies have been conducted on the possible theories of vague language which can describe such borderline cases (Bonini, Osherson, Viale \& Williamson, 1999; Alxatib \& Pelletier, 2011; Blutner, Pothos \& Bruza, 2012; Ripley, 2011; Sauerland, 2010), but also in this case the obtained results have not been unanimously accepted.

Meanwhile, it has become evident that quantum structures are systematically and successfully applied in a variety of situations in cognitive and social sciences (Aerts, 2009a; Aerts \& Aerts, 1995; Aerts, Broekaert, Gabora \& Sozzo, 2012; Aerts \& Czachor, 2004;  Aerts \& Gabora, 2005a,b; Aerts, Gabora \& Sozzo, 2012;  Bruza, Kitto, McEvoy \& McEvoy, 2008; Bruza, Kitto, McEvoy \& Nelson 2009; Busemeyer \& Bruza, 2012; Busemeyer \& Lambert-Mogiliansky, 2009; Busemeyer, Wang \& Townsend, 2006; Busemeyer, Pothos, Franco \& Trueblood, 2011; Franco, 2009; Khrennikov, 2010; Lambert-Mogilansky, Zamir \& Zwirn, 2009; Melucci, 2008; Pothos \& Busemeyer, 2009; van Rijsbergen, 2004; Wang, Busemeyer, Atmanspacher \& Pothos, 2012; Widdows, 2006). For this reason, a {\it Quantum Interaction approach} was born as an interdisciplinary perspective in which the formalisms of quantum theory were used to model specific situations in disciplines different from the micro-world (Bruza, Lawless, van Rijsbergen \& Sofge, 2007; Bruza, Lawless, van Rijsbergen \& Sofge, 2008;  Bruza, Sofge, Lawless, van Rijsbergen \& Klusch, 2009; Song, Melucci, Frommholz, Zhang, Wang \& Arafat, 2011). In addition, the new emergent field of Quantum Interaction focusing on the application of quantum structures in cognitive disciplines was named {\it Quantum Cognition}. In this paper, we mainly deal with the quantum-theoretic approach elaborated within the Brussels group (Aerts, 2009a,b; Aerts \& Aerts, 1995; Aerts, Aerts \& Gabora, 2009; Aerts, Broekaert, Gabora \& Sozzo, 2012; Aerts, Broekaert, Gabora \& Veloz, 2012; Aerts \& D'Hooghe, 2009; Aerts \& Gabora, 2005a,b; Aerts, Gabora \& Sozzo, 2012; Aerts \& Sozzo, 2011; Gabora \& Aerts, 2002). This approach was inspired by a two decade research on the foundations of quantum theory (Aerts, 1999), the origins of quantum probability (Aerts, 1986) and the identification of typically quantum aspects, such as contextuality, emergence, entanglement, interference, superposition, in macroscopic domains (Aerts \& Aerts, 1995; Aerts, Aerts, Broekaert \& Gabora, 2000; Aerts, Broekaert \& Smets, 1999). A `SCoP formalism' was worked out within the Brussels approach which relies on the interpretation of a concept as an `entity in a specific state changing under the influence of a context' rather than as a `container of instantiations' (Aerts \& Gabora, 2005a,b; Gabora \& Aerts, 2002). This representation of a concept was new with respect to traditional approaches (see, e.g., prototype theory (Rosch, 1973; Rosch, 1977; Rosch, 1983), exemplar theory (Nosofsky, 1988; Nosofsky, 1992) and theory theory (Murphy \& Medin, 1985; Rumelhart \& Norman, 1988)), and allowed the authors to provide a quantum representation of the guppy effect explaining at the same time its occurrence in terms of contextual influence (Aerts \& Gabora, 2005a,b). Successively, the mathematical formalism of quantum theory was employed to model the overextension and underextension of membership weights measured by Hampton (1988a,b). More specifically, the overextension for conjunctions of concepts measured by Hampton (1988a) was described as an effect of quantum interference and superposition (Aerts, 2009a; Aerts, Gabora \& Sozzo, 2012), which also play a primary role in the description of both overextension and underextension for disjunctions of concepts (Hampton, 1988b). Furthermore, a specific conceptual combination experimentally revealed the presence of another genuine quantum effect, namely, entanglement (Aerts, 2009a,b; Aerts, Broekaert, Gabora \& Sozzo, 2012; Aerts, Gabora \& Sozzo, 2012; Aerts \& Sozzo, 2011). Finally, also emergence occurs in conceptual processes, and an explanatory hypothesis was put forward according to which human thought is the superposition of `quantum emergent thought' and `quantum logical thought', and that the quantum modeling approach applied in Fock space enables this approach to general human thought, consisting of a superposition of these two modes, to be modeled.

In this paper, we apply the quantum-based approach for conceptual combinations mentioned above to the study of a concrete problem, namely, `borderline vagueness', also showing how and where our treatment is different from existing ones and is innovative. We firstly summarize our quantum-theoretic model of human thought in Fock space in Section \ref{brussels}. We explain how two processes simultaneously occur during a decision, namely, a logical process and a conceptual process and how both processes can be modeled in a Fock space. Then, we supply in Section \ref{alxatib_pelletier} an analysis of the empirical data collected by Alxatib and Pelletier (2011) on borderline cases of the form ``John is tall and not tall'', stressing the conceptual difficulties connected with a classical logical explanation of these data. Successively, we apply in Section \ref{model} the above quantum-theoretic approach in Fock space to the data in Section \ref{alxatib_pelletier}, hence to the borderline cases of the form ``John is tall and John is not tall''. The latter Fock space modeling is successively extended in Section \ref{lastmodeling} to deal with sentences of the form ``John is neither tall nor not tall''. More specifically, we first show that a single classical probability space cannot model these experimental data, and how a quantum probability model can be used, which entails the relevance of complex numbers. Then, we particularize the quantum probability formula for the conjunction of concepts to the specific vague concepts considered by the authors, and calculate the interference term, together with the weights of `quantum emergent thought' and `quantum logical thought' of the Fock space. Our results accord with the empirical data and allow us to propose the explanatory hypothesis that borderline contradictions are not due to a deviation from the basic logical reasoning but, rather, they are due to the presence of genuine quantum effects, including contextuality, superposition, interference and emergence. Furthermore, it follows from our modeling that emergence is the dominant effect in the dynamics of human thought, a conclusion compatible with the identification of emergence as dominant effect in the Fock space modeling of the conjunction of concepts (Aerts, 2009a; Aerts, Gabora \& Sozzo, 2012). In other words, a borderline contradiction is not really a logical contradiction: it `appears' as a contradiction only if logical thought is considered as the dominant dynamics, as it is usually maintained. On the contrary, if one accepts our theoretical explanation, a borderline contradiction is just one of the manifestations of conceptual emergence, as the dominant dynamics of human thought.

A methodological discussion is presented in Section \ref{methods} where it is argued that the model presented here and, more generally, the quantum-theoretic modeling in Aerts (2009a), have both descriptive and explanatory power. In this perspective, we provide some suggestions on how to falsify them in actual cognitive experiments.  

To conclude we observe that our quantum probability model in Fock space presents various analogies with the quantum probability model recently elaborated by Blutner, Pothos and Bruza (2012) in the study on borderline contradictions. We compare the two models in Section \ref{comparison}, stressing that they come to the same conclusions for what concerns the appearence of quantum aspects as a major cause of the observed deviations from classicality in borderline vagueness. We also observe, however, that the two models are different at a structural and a conceptual level.

\section{Quantum-theoretic modeling for concept combination\label{brussels}}
In this section, we present the basic notions and results on conceptual vagueness and combinations that have been attained within our quantum modeling approach and that will be needed for our purposes. Though many of these results can be obtained by representing vagueness, concepts and combinations in a Hilbert space (see the Appendix), we prefer to work out a more general quantum representation for vagueness in Fock space from the very beginning, since this is the formulation we will refer to when dealing with borderline contradictions.\footnote{We observe that an application of Fock spaces to cognitive phenomena can be found in Beim Graben et al. (2008), where the full Fock space is used in natural language processing.} 

In its simplest mathematical form, a Fock space $\mathcal F$, for the case of combining two entities, which is what we focus on here, consists of two sectors: `sector 1' is a Hilbert space $\cal H$, while `sector 2' is a tensor product Hilbert space $\cal H \otimes \cal H$. As a general consideration, sector 1 mainly allows the modeling of interference connected phenomena, while sector 2 mainly allows the modeling of entanglement connected phenomena.

Let us now consider the membership weights of exemplars of concepts and their conjunctions/disjunctions measured by Hampton (1988a,b). He identified systematic deviations from classical set (fuzzy set) conjunctions/disjunctions, an effect known as `overextension' or `underextension'. 

Let us start from conjunctions. Relying on research on the foundations of quantum theory and quantum probability (Pitowsky, 1989), it can be shown that a large part of Hampton's data cannot be modeled in a classical probability space satisfying the axioms of Kolmogorov (1933), due to the following theorem.
\newtheorem{theorem}{Theorem}
\begin{theorem} \label{th1}
The membership weights $\mu(A), \mu(B)$ and $\mu(A\ {\rm and}\ B)$ of an exemplar $x$ for the concepts $A$, $B$ and $`A \ {\rm and} \ B'$ can be represented in a classical probability model if and only if the following two conditions are satisfied. 
\begin{eqnarray} \label{mindeviation}
\Delta_c=\mu(A\ {\rm and}\ B)-\min(\mu(A),\mu(B)) \le 0 \\ \label{kolmogorovianfactorconjunction}
0 \le k_c=1-\mu(A)-\mu(B)+\mu(A\ {\rm and}\ B)
\end{eqnarray}
where $\Delta_c$ is the \emph{conjunction rule minimum deviation}, and $k_c$ is the \emph{Kolmogorovian conjunction factor}.
\begin{proof} See Aerts, 2009a, theorem 3.
\end{proof}
\end{theorem}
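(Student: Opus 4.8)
The plan is to prove both directions of the equivalence by exploiting the fact that only three numbers — hence at most two independent events — are involved, so that the Venn diagram of the corresponding events $E_A$ and $E_B$ has only four cells, and the existence of a classical model reduces to the nonnegativity of four cell probabilities. Throughout I would use that membership weights satisfy $0 \le \mu(A), \mu(B), \mu(A\ {\rm and}\ B) \le 1$.

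For necessity, I would suppose a Kolmogorovian probability space $(\Omega, \mathcal{A}, P)$ and measurable sets $E_A, E_B \in \mathcal{A}$ with $\mu(A)=P(E_A)$, $\mu(B)=P(E_B)$ and $\mu(A\ {\rm and}\ B)=P(E_A \cap E_B)$. Monotonicity of $P$ gives $P(E_A \cap E_B) \le \min(P(E_A),P(E_B))$, which is exactly \eqref{mindeviation}. Finite additivity gives $P(E_A \cup E_B) = P(E_A)+P(E_B)-P(E_A \cap E_B)$, and since $P(E_A \cup E_B) \le 1$ this rearranges to \eqref{kolmogorovianfactorconjunction}.

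For sufficiency, I would explicitly build a model on a four-point sample space $\Omega = \{\omega_{11},\omega_{10},\omega_{01},\omega_{00}\}$, the subscripts recording (non-)membership in $E_A$ and $E_B$. Set $P(\{\omega_{11}\})=\mu(A\ {\rm and}\ B)$, $P(\{\omega_{10}\})=\mu(A)-\mu(A\ {\rm and}\ B)$, $P(\{\omega_{01}\})=\mu(B)-\mu(A\ {\rm and}\ B)$ and $P(\{\omega_{00}\})=k_c$. These four numbers sum to $1$ by construction; $P(\{\omega_{11}\}) \ge 0$ since $\mu(A\ {\rm and}\ B)\in[0,1]$; $P(\{\omega_{10}\}),P(\{\omega_{01}\}) \ge 0$ by \eqref{mindeviation}; and $P(\{\omega_{00}\}) \ge 0$ by \eqref{kolmogorovianfactorconjunction}. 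Defining $E_A=\{\omega_{11},\omega_{10}\}$ and $E_B=\{\omega_{11},\omega_{01}\}$ and summing the relevant atoms recovers $\mu(A)$, $\mu(B)$ and $\mu(A\ {\rm and}\ B)$, so this model realizes the three weights.

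I expect the only genuinely delicate point to be conceptual rather than computational: one must be convinced that these two inequalities exhaust all constraints, i.e. that with exactly two events no further (Bell-type) conditions are forced — which holds precisely because the four-cell construction above is always available once the cell weights are nonnegative. This is the low-dimensional face of Pitowsky's correlation polytope, and the same device yields the companion statement for disjunctions; with three or more concepts additional facet inequalities would generically appear. I would close by remarking that \eqref{mindeviation} is exactly the absence of overextension and \eqref{kolmogorovianfactorconjunction} the familiar Kolmogorovian bound, so that Hampton's overextended data, for which $\mu(A\ {\rm and}\ B) > \min(\mu(A),\mu(B))$, cannot be accommodated in any single classical probability space — which is the use to which the theorem is put.
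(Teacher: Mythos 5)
Your proof is correct, and it is essentially the argument behind the result the paper cites (Aerts, 2009a, Theorem 3): necessity from monotonicity and additivity of $P$, sufficiency via the explicit four-atom construction, whose cell probabilities are nonnegative exactly when Equations (\ref{mindeviation}) and (\ref{kolmogorovianfactorconjunction}) hold together with $0\le\mu(A\ {\rm and}\ B)$ -- the low-dimensional case of Pitowsky's correlation-polytope characterization, as you note. The paper itself gives no proof beyond the citation, so your write-up simply supplies the standard argument in full; the only assumption you rely on beyond the stated conditions is that the membership weights lie in $[0,1]$, which is implicit in their being weights.
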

Let us briefly comment on Theorem \ref{th1}. Equation (\ref{mindeviation}) expresses compatibility with the minimum rule for the conjunction of fuzzy set theory and, more generally, with monotonicity of classical Kolmogorovian probability. A situation with $\Delta_c >0$ was called \emph{overextension} by Hampton (1988a). Equation (\ref{kolmogorovianfactorconjunction}) expresses instead compatibility with additivity of classical Kolmogorovian probability. Equations (\ref{mindeviation}) and (\ref{kolmogorovianfactorconjunction}) provide together necessary and sufficient conditions to describe the experimental membership weights $\mu(A), \mu(B)$ and $\mu(A\ {\rm and}\ B)$ in a single Kolmogorovian probability space $(\Omega, \sigma(\Omega), P)$ ($\sigma(\Omega)$ being a $\sigma$-algebra of subsets of $\Omega$, $P$ probability measure on $\Omega$). In this case, indeed, events $E_A,E_B\in \sigma(\Omega)$ exist such that $P(E_A)=\mu(A)$, $P(E_B)=\mu(B)$, and $P(E_A \cap E_B)=\mu(A \ {\rm and} \ B)$. 

Let us now consider a specific example. Hampton estimated the membership weight of {\it Mint} with respect to the concepts {\it Food}, {\it Plant} and their conjunction {\it Food and Plant} finding $\mu_{Mint}(Food)=0.87$, $\mu_{Mint}(Plant)=0.81$, $\mu_{Mint}(Food \ {\rm and}\ Plant)=0.9$. Thus, the exemplar \emph{Mint} presents overextension with respect to the conjunction \emph{Food and Plant} of the concepts \emph{Food} and \emph{Plant}. We have in this case $\Delta_c=0.09\not\le0$, hence no classical probability representation exists for these data, because of Theorem \ref{th1}. 

Let us now come to disjunctions. Also in this case, a large part of Hampton's data (1988b) cannot be modeled in a classical Kolmogorovian probability space, due to the following theorem.
\begin{theorem} \label{th3}
The membership weights $\mu(A), \mu(B)$ and $\mu(A\ {\rm or}\ B)$ of an exemplar $x$ for the concepts $A$, $B$ and $`A \ {\rm or} \ B'$ can be represented in a classical probability model if and only if the following two conditions are satisfied. 
\begin{eqnarray} \label{maxdeviation}
\Delta_d=\max(\mu(A),\mu(B))-\mu(A\ {\rm or}\ B)\le 0 \\ \label{kolmogorovianfactordisjunction}
0 \le k_d=\mu(A)+\mu(B)-\mu(A\ {\rm or}\ B)
\end{eqnarray}
where $\Delta_d$ is the \emph{disjunction maximum rule deviation}, and $k_d$ is the \emph{Kolmogorovian disjunction factor}.
\begin{proof} See (Aerts, 2009a), theorem 6.
\end{proof}
\end{theorem}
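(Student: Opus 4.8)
The plan is to follow exactly the same strategy that established Theorem \ref{th1} for conjunctions, since disjunctions are dual to conjunctions under complementation. First I would recall the classical probabilistic setup: a Kolmogorovian representation of $\mu(A)$, $\mu(B)$ and $\mu(A\ {\rm or}\ B)$ means finding a probability space $(\Omega, \sigma(\Omega), P)$ together with events $E_A, E_B \in \sigma(\Omega)$ such that $P(E_A)=\mu(A)$, $P(E_B)=\mu(B)$, and $P(E_A \cup E_B)=\mu(A\ {\rm or}\ B)$, interpreting the disjunction of concepts as the set-theoretic union of the corresponding events.

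The key step is to reduce this to the conjunction case already handled by Theorem \ref{th1}. Given events $E_A, E_B$, put $F_A = \Omega \setminus E_A$ and $F_B = \Omega \setminus E_B$, so that $P(F_A) = 1-\mu(A)$, $P(F_B) = 1-\mu(B)$, and by De Morgan $P(F_A \cap F_B) = P(\Omega \setminus (E_A \cup E_B)) = 1 - \mu(A\ {\rm or}\ B)$. Thus a classical model for $(\mu(A),\mu(B),\mu(A\ {\rm or}\ B))$ exists if and only if a classical model exists for the triple $(1-\mu(A),\ 1-\mu(B),\ 1-\mu(A\ {\rm or}\ B))$ viewed as membership weights for two concepts and their conjunction. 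Now I would apply Theorem \ref{th1} to this complemented triple. Condition (\ref{mindeviation}) becomes
\begin{equation}
1-\mu(A\ {\rm or}\ B) - \min(1-\mu(A),\ 1-\mu(B)) \le 0,
\end{equation}
and since $\min(1-\mu(A),1-\mu(B)) = 1 - \max(\mu(A),\mu(B))$, this rearranges precisely to $\Delta_d = \max(\mu(A),\mu(B)) - \mu(A\ {\rm or}\ B) \le 0$, which is (\ref{maxdeviation}). Condition (\ref{kolmogorovianfactorconjunction}) becomes
\begin{equation}
0 \le 1 - (1-\mu(A)) - (1-\mu(B)) + (1-\mu(A\ {\rm or}\ B)) = \mu(A)+\mu(B)-\mu(A\ {\rm or}\ B) = k_d,
\end{equation}
which is (\ref{kolmogorovianfactordisjunction}). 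Hence the two conditions are equivalent to Kolmogorovian representability, as claimed.

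Alternatively, if one does not wish to route through Theorem \ref{th1}, one proves it directly: for necessity, $\Delta_d \le 0$ is just monotonicity $P(E_A), P(E_B) \le P(E_A \cup E_B)$, and $k_d = P(E_A) + P(E_B) - P(E_A \cup E_B) = P(E_A \cap E_B) \ge 0$ is inclusion–exclusion; for sufficiency one constructs an explicit finite model on at most four atoms ($E_A \cap E_B$, $E_A \setminus E_B$, $E_B \setminus E_A$, and the complement of the union) with weights $k_d$, $\mu(A)-k_d$, $\mu(B)-k_d$, and $1-\mu(A)-\mu(B)+\mu(A\ {\rm or}\ B)$, checking that $\Delta_d\le 0$ and $k_d \ge 0$ make all four nonnegative and that they sum to one. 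The main obstacle, such as it is, is bookkeeping: making sure the De Morgan translation is applied consistently and that the direction of each inequality is tracked correctly through the complementation. There is no deep difficulty here — the content is entirely contained in Theorem \ref{th1} together with the observation that union and intersection are interchanged by complementation, so the cleanest writeup is the two-line reduction above with a pointer to Aerts (2009a), theorem 6, for the self-contained argument.
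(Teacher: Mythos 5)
Your argument is correct, and it actually supplies a proof where the paper gives none: the paper's ``proof'' of Theorem~\ref{th3} is only a pointer to Aerts (2009a), Theorem~6 (and indeed Aerts derives the disjunction conditions there essentially by the same complementation/duality you use, in parallel with the conjunction case). Your main line --- replacing $E_A,E_B$ by their complements so that union becomes intersection, noting $P(F_A\cap F_B)=1-\mu(A\ {\rm or}\ B)$, and then translating the two conditions of Theorem~\ref{th1} via $\min(1-\mu(A),1-\mu(B))=1-\max(\mu(A),\mu(B))$ --- is a clean two-way reduction and correctly recovers (\ref{maxdeviation}) and (\ref{kolmogorovianfactordisjunction}). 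One small slip in your alternative direct argument: the atom complementary to $E_A\cup E_B$ must carry weight $1-\mu(A\ {\rm or}\ B)$, not $1-\mu(A)-\mu(B)+\mu(A\ {\rm or}\ B)$; the latter equals $1-k_d=1-P(E_A\cap E_B)$, i.e.\ it is the conjunction-case expression carried over without adjustment, and with it your four weights sum to $1-\mu(A)-\mu(B)+2\mu(A\ {\rm or}\ B)\neq 1$ in general. With the corrected fourth weight the atoms are $k_d$, $\mu(A\ {\rm or}\ B)-\mu(B)$, $\mu(A\ {\rm or}\ B)-\mu(A)$ and $1-\mu(A\ {\rm or}\ B)$, whose nonnegativity is exactly $k_d\ge 0$, $\Delta_d\le 0$ and $\mu(A\ {\rm or}\ B)\le 1$, and they sum to $1$; this also makes explicit that the weights are implicitly assumed to lie in $[0,1]$, as they are in the paper. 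Since this slip occurs only in the optional direct construction and your primary reduction through Theorem~\ref{th1} is complete, the proof stands.
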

Let us briefly comment on Theorem \ref{th3}. Equation (\ref{maxdeviation}) expresses compatibility with the maximum rule for the conjunction of fuzzy set theory and, more generally, with monotonicity of classical Kolmogorovian probability. A situation with $\Delta_d >0$ was called \emph{underextension} by Hampton (1988b). Equation (\ref{kolmogorovianfactordisjunction}) expresses instead compatibility with additivity of classical Kolmogorovian probability. Equations (\ref{maxdeviation}) and (\ref{kolmogorovianfactordisjunction}) provide together necessary and sufficient conditions to describe the experimental membership weights $\mu(A), \mu(B)$ and $\mu(A\ {\rm or}\ B)$ in a single Kolmogorovian probability space $(\Omega, \sigma(\Omega), P)$ ($\sigma(\Omega)$. In this case, indeed, events $E_A,E_B\in \sigma(\Omega)$ exist such that $P(E_A)=\mu(A)$, $P(E_B)=\mu(B)$, and $P(E_A \cup E_B)=\mu(A \ {\rm or} \ B)$. 

Let us again consider a specific example. Hampton estimated the membership weight of {\it Donkey} with respect to the concepts {\it Pet}, {\it Farmyard Animal} and their disjunction {\it Pet or Farmyard Animal} finding $\mu_{Donkey}(Pet)=0.5$, $\mu_{Donkey}(Farmyard \ Animal)=0.9$, $\mu_{Donkey}(Pet \ {\rm or}\ Farmyard \ Animal)=0.7$. Thus, the exemplar \emph{Donkey} presents underextension with respect to the disjunction \emph{Pet or Farmyard Animal} of the concepts \emph{Pet} and \emph{Farmyard Animal}. We have in this case $\Delta_d=0.2\not\le0$, hence no classical probability representation exists for these data, because of Theorem \ref{th3}.

It can be proved that a quantum probability model in Fock space exists for Hampton's data (1988a,b). We introduce the essentials of it in the following by using the mathematical formalism resumed in the Appendix (Aerts, 2009a).

Let us consider a concept $A$ and an exemplar (item) $x$. We represent $A$ and $x$ by the unit vectors $|A\rangle$ and $|x\rangle$, respectively, of a Hilbert space $\cal H$, and describe the decision measurement of a subject estimating whether $x$ is a member of $A$ by means of a dichotomic observable represented by the orthogonal projection operator $M$. The probability $\mu(A)$ that $x$ is chosen as a member of $A$, i.e. the membership weight, is given by the scalar product $\mu(A)=\langle A | M|A\rangle$. Let $A$ and $B$ be two concepts, represented by the unit vectors $|A\rangle$ and $|B\rangle$, respectively. To represent the concept $`A \ \textrm{or} \ B'$ we take the archetypical situation of the quantum double slit experiment, where $|A\rangle$ and $|B\rangle$ represent the states of a quantum particle in which only one slit is open, ${1 \over \sqrt{2}}(|A\rangle+|B\rangle)$ represents the state of the quantum particle in which both slits are open, and $\mu(A\ {\rm or}\ B)$ is the probability that the quantum particle is detected in the spot $x$ of a screen behind the slits. Thus, the concept $`A \ \textrm{or} \ B'$ is represented by the unit vector ${1 \over \sqrt{2}}(|A\rangle+|B\rangle)$, and  $|A\rangle$ and $|B\rangle$ are chosen to be orthogonal, i.e. $\langle A | B \rangle=0$. The membership weights $\mu(A), \mu(B)$ and $\mu(A\ {\rm or}\ B)$ of an exemplar $x$ for the concepts $A$, $B$ and $A \ {\rm or} \ B$ are respectively given by
\begin{eqnarray}
\mu(A)&=&\langle A | M|A\rangle \\
\mu(B)&=&\langle B | M|B\rangle \\
\mu(A \ {\rm or} \ B)&=&{1 \over 2}(\mu(A)+\mu(B))+\Re\langle A|M|B\rangle
\end{eqnarray}
where $\Re\langle A|M|B\rangle$ is the real part of the complex number $\langle A|M|B\rangle$. The term $\Re\langle A|M|B\rangle$ is called `interference term' in the quantum jargon, since it produces a deviation from the average ${1 \over 2}(\mu(A)+\mu(B))$ which would have been observed in the quantum double slit experiment in absence of interference. We can see that, already at this stage, two genuine quantum effects, namely, superposition and interference, occur in the mechanism of combination of the concepts $A$ and $B$. In Aerts (2009a) and Aerts, Gabora \& Sozzo (2012) it has been proved that the model above can be realized in the Hilbert space $\mathbb{C}^{3}$ with the interference term given by
\begin{equation}
Int^{d}_{\theta}(A,B)=\sqrt{1-a(A)}\sqrt{1-b(B)}\cos\theta
\end{equation}
with $\theta$ being the `interference angle'. The quantities $a(A)$ and $b(B)$ are defined as $a(A)=1-\mu(A)$ and $b(B)=1-\mu(B)$ if $\mu(A)+\mu(B)\le 1$, $a(A)=\mu(A)$ and $b(B)=\mu(B)$ if $\mu(A)+\mu(B)>1$.

The quantum-theoretic modeling presented above correctly describes a large part of data in Hampton (1988b), but it cannot cope with quite some cases, in fact most of all the cases that behave more classically than the ones that are easily modeled by quantum interference. The reason is that, if one wants to reproduce Hampton's data within a quantum model which fully exploits the analogy with the quantum double slit experiment, one has to include the situation in which two identical quantum particles are considered, both particles passes through the slits, and the probability that at least one particle is detected in the spot $x$ is calculated. This probability is given by $\mu(A)+\mu(B)-\mu(A)\mu(B)$ (Aerts, 2009a). Quantum field theory in Fock space allows one to complete the model, as follows.

In quantum field theory, a quantum entity is described by a field which consists of superpositions of different configurations of many quantum particles. Thus, the quantum entity is associated with a Fock space $\cal F$ which is the direct sum $\oplus$ of different Hilbert spaces, each Hilbert space describing a defined number of quantum particles. In the simplest case, ${\cal F}={\cal H} \oplus ({\cal H}\otimes {\cal H})$, where $\cal H$ is the Hilbert space of a single quantum particle (sector 1 of $\cal F$) and ${\cal H} \otimes {\cal H}$ is the (tensor product) Hilbert space of two identical quantum particles (sector 2 of $\cal F$). Entanglement connected phenomena can obviously be described in ${\cal H} \otimes {\cal H}$.

Let us come back to our modeling for concept combinations. The normalized superposition ${1 \over \sqrt{2}}(|A\rangle+|B\rangle)$ represents the state of the new emergent concept $`A \ \textrm{or} \ B'$ in sector 1 of the Fock space $\cal F$. In sector 2 of $\cal F$, instead, the state of the concept $`A \ \textrm{or} \ B'$ is represented by the unit (product) vector $|A\rangle\otimes|B\rangle$. To describe the decision measurement in this sector, we first suppose that the subject considers two identical copies of the exemplar $x$, pondering on the membership of the first copy of $x$ with respect to $A$ `and' the membership of the second copy of $x$ with respect to $B$. The probability of getting `yes' in both cases is, by using quantum mechanical rules, $(\langle A|\langle B|)|M \otimes M | (|A\rangle\otimes|B\rangle)$. The probability of getting at least a positive answer is instead $1-(\langle A|\langle B|)|(\mathbbmss{1}-M) \otimes (\mathbbmss{1}- M) | (|A\rangle\otimes|B\rangle)$. Hence, the membership weight of the exemplar $x$ with respect to the concept $`A \ \textrm{or} \ B'$ coincides in sector 2 with the latter probability and can be written as  $1-(\langle A|\langle B|)|(\mathbbmss{1}-M) \otimes (\mathbbmss{1}- M) | (|A\rangle\otimes|B\rangle)=\mu(A)+\mu(B)-\mu(A)\mu(B)=(\langle A|\langle B|)| M \otimes \mathbbmss{1}+\mathbbmss{1}\otimes M - M \otimes M | (|A\rangle\otimes|B\rangle)$.

Coming to the Fock space ${\cal F}={\cal H} \oplus ({\cal H}\otimes {\cal H})$, the global initial state of the concepts is represented by the unit vector
\begin{equation}
|\Psi(A,B)\rangle=m e^{i\lambda}|A\rangle\otimes|B\rangle+ne^{i\nu}{1\over \sqrt{2}}(|A\rangle+|B\rangle)
\end{equation}
where the real numbers $m,n$ are such that $0\le m,n$ and $m^2+n^2=1$. The decision measurement on the membership weight of the exemplar $x$ with respect to the concept $`A \ \textrm{or} \ B'$ is represented by the orthogonal projection operator $M \oplus (M \otimes \mathbbmss{1}+\mathbbmss{1}\otimes M - M \otimes M)$, hence the membership weight of $x$ with respect to $`A \ \textrm{or} \ B'$ is given by
\begin{eqnarray} 
\mu(A \ \textrm{or} \ B)=\langle \Psi(A,B)|M \oplus (M \otimes \mathbbmss{1}+\mathbbmss{1}\otimes M - M \otimes M)|\Psi(A,B)\rangle \nonumber\\
=m^2(\mu(A)+\mu(B)-\mu(A)\mu(B))+n^2({\mu(A)+\mu(B) \over 2}+\Re\langle A|M|B\rangle)\label{OR}
\end{eqnarray}
Let us now consider to the representation for the conjunction $`A \ \textrm{and} \ B'$. Here, the decision measurement for the membership weight of the exemplar $x$ with respect to the concept $`A \ \textrm{and} \ B'$ is represented by the orthogonal projection operator $M \oplus (M \otimes M)$, while the membership weight of $x$ with respect to $`A \ \textrm{and} \ B'$ is given by\footnote{The membership weight $\mu(A \ \textrm{or} \ B)$ could have been calculated from the membership weight $\mu(A \ \textrm{and} \ B)$ by observing that the probability that a subject decides for the membership of the exemplar $x$ with respect to the concept $`A \ \textrm{or} \ B'$ is 1 minus the probability of decision against membership of $x$ with respect to the concept $`A \ \textrm{and} \ B'$.}
\begin{eqnarray} 
\mu(A \ \textrm{and} \ B)=\langle \Psi(A,B)|M \oplus (M \otimes M)|\Psi(A,B)\rangle \nonumber \\
=m^2\mu(A)\mu(B)+n^2({\mu(A)+\mu(B) \over 2}+\Re\langle A|M|B\rangle)\label{AND}
\end{eqnarray}
By comparing Equations (\ref{OR}) and (\ref{AND}), it seems that the formulas for the membership weights for conjunction and disjunction of two concepts differ only for the piece in sector 2 of Fock space, $\mu(A)\mu(B)$ versus $\mu(A)+\mu(B)-\mu(A)\mu(B)$, while the piece in sector 1 remains unchanged, which would be counterintuitive. There is a very subtle aspect involved here and it should be clarified. The piece $({\mu(A)+\mu(B) \over 2}+\Re\langle A|M|B\rangle)$ is only formally identical in Equations (\ref{OR}) and (\ref{AND}). We remind, in this respect, that the mathematical representation of the unit vectors $|A\rangle$ and $|B\rangle$ in a concrete Hilbert space generally depends on the exemplar $x$, on the membership weights $\mu(A)$ and $\mu(B)$, and also on whether $\mu(A \ \textrm{or} \ B)$ or $\mu(A \ \textrm{and} \ B)$ is measured, which results in a different interference angle $\theta$. This means that, for a given exemplar $x$, the mathematical representations of $A$ and $B$, hence the interference term $\Re\langle A|M|B\rangle$ is different for $`A \ \textrm{or} \ B'$ and  $`A \ \textrm{and} \ B'$. This was not explicitly mentioned in Aerts (2009a), because Hampton did not measure the disjunction and the conjunction of two concepts for the same exemplars. But, one realizes at once that the interference angles for similar concepts are very different. For example, the interference angles for \emph{Furniture}, \emph{Household appliances}, \emph{Furniture and Household appliances} are very different from those for \emph{House furnishings}, \emph{Furniture}, \emph{House furnishings or Furniture} (see (Aerts, 2009), p. 318). 

The analysis above provides an intuitve theoretical support to the following two theorems.
\begin{theorem} \label{th2}
The membership weights $\mu(A), \mu(B)$ and $\mu(A\ {\rm and}\ B)$ of an exemplar $x$ for the concepts $A$, $B$ and $`A \ {\rm and} \ B'$ can be represented in a quantum probability model where
\begin{equation} \label{membershipweightinterference}
\mu(A\ {\rm and}\ B)=m^2\mu(A)\mu(B)+n^2({\mu(A)+\mu(B) \over 2}+Int^{c}_{\theta}(A,B))
\end{equation}
where the numbers $m^2$ and $n^2$ are convex coefficients, i.e. $0 \le m^2, n^2 \le 1$, $m^2+n^2=1$, and ${\theta}$ is the \emph{interference angle} with
\begin{equation} \label{interferenceterm}
Int^{c}_{\theta}(A,B)=\sqrt{1-a(A)}\sqrt{1-b(B)}\cos\theta
\end{equation}
where $a(A)$ and $b(B)$ are defined as above.
\begin{proof} See (Aerts, 2009a).
\end{proof}
\end{theorem}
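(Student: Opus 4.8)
The plan is to prove the theorem by exhibiting an explicit quantum model in the Fock space $\mathcal{F}=\mathbb{C}^{3}\oplus(\mathbb{C}^{3}\otimes\mathbb{C}^{3})$, along the lines already used above for the disjunction, and then checking that the membership weight it produces has exactly the stated form. Fix an orthonormal basis $\{|1\rangle,|2\rangle,|3\rangle\}$ of $\mathbb{C}^{3}$. I would take the decision projector $M$ on sector $1$ to be the rank-one projector onto $\mathrm{span}(|1\rangle)$ when $\mu(A)+\mu(B)\le 1$, and the rank-two projector onto $\mathrm{span}(|1\rangle,|2\rangle)$ when $\mu(A)+\mu(B)>1$; this dichotomy is precisely what the two-case definition of $a(A)$ and $b(B)$ encodes. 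I then choose unit vectors $|A\rangle$ and $|B\rangle$ with $\langle A|M|A\rangle=\mu(A)$, $\langle B|M|B\rangle=\mu(B)$ and $\langle A|B\rangle=0$, distributing their components outside the range of $M$ over the remaining two-dimensional subspace.

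The key step is to show that these three requirements on $|A\rangle$ and $|B\rangle$ can be met simultaneously. Splitting $\langle A|B\rangle$ into its $M$-part and its $(\mathbbmss{1}-M)$-part, orthogonality forces these two parts to have equal modulus; by Cauchy--Schwarz the $M$-part has modulus at most $\sqrt{\mu(A)\mu(B)}$ and the $(\mathbbmss{1}-M)$-part modulus at most $\sqrt{(1-\mu(A))(1-\mu(B))}$, so a common value is available exactly when the smaller ceiling dominates, i.e.\ when $\mu(A)+\mu(B)\le 1$ in the rank-one case and $\mu(A)+\mu(B)\ge 1$ in the rank-two case --- which is why $M$ was chosen as above. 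With a free relative phase between $|A\rangle$ and $|B\rangle$ the construction moreover gives $\Re\langle A|M|B\rangle=\pm\sqrt{1-a(A)}\,\sqrt{1-b(B)}\cos\varphi$; absorbing the sign and the phase $\varphi$ into a single parameter $\theta$ yields $\Re\langle A|M|B\rangle=Int^{c}_{\theta}(A,B)$ with $Int^{c}_{\theta}(A,B)=\sqrt{1-a(A)}\sqrt{1-b(B)}\cos\theta$, and this for any prescribed value of $\theta$.

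With the building blocks fixed I would compute the two sector contributions. In sector $1$ the state $\tfrac{1}{\sqrt 2}(|A\rangle+|B\rangle)$ measured by $M$ gives $\tfrac12(\mu(A)+\mu(B))+\Re\langle A|M|B\rangle=\tfrac12(\mu(A)+\mu(B))+Int^{c}_{\theta}(A,B)$, exactly as in the double-slit analogy recalled above. In sector $2$ the product state $|A\rangle\otimes|B\rangle$ measured by the ``both answers yes'' projector $M\otimes M$ gives $\langle A|M|A\rangle\,\langle B|M|B\rangle=\mu(A)\mu(B)$. Forming the global state $|\Psi(A,B)\rangle=m e^{i\lambda}|A\rangle\otimes|B\rangle+n e^{i\nu}\tfrac{1}{\sqrt 2}(|A\rangle+|B\rangle)$ with $m,n\ge 0$, $m^{2}+n^{2}=1$, and acting with $M\oplus(M\otimes M)$, the orthogonality of the two Fock sectors kills all cross terms and the phases $\lambda,\nu$ drop out, leaving
\begin{equation}
\mu(A\ \textrm{and}\ B)=m^{2}\mu(A)\mu(B)+n^{2}\Bigl(\tfrac{\mu(A)+\mu(B)}{2}+Int^{c}_{\theta}(A,B)\Bigr),
\end{equation}
which is Equation~(\ref{membershipweightinterference}).

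Finally, to represent a \emph{given} measured triple $\mu(A),\mu(B),\mu(A\ \textrm{and}\ B)$ I would choose the still-free parameters $m^{2}\in[0,1]$ and $\theta$ so that the right-hand side above hits the observed value; by continuity in $(m^{2},\cos\theta)$ that right-hand side sweeps out a full interval, and one checks (as in Aerts, 2009a) that this interval contains the empirically relevant weights, in particular the overextended ones with $\Delta_{c}>0$ that Theorem~\ref{th1} rules out of any single classical probability space. I expect the main obstacle to be the argument of the second paragraph: getting the orthogonality-versus-weights bookkeeping exactly right, since it is this constraint --- and not the Fock-space algebra, which is routine once the sector weights are known --- that dictates the two-case definition of $a(A)$ and $b(B)$ and hence the precise form $\sqrt{1-a(A)}\sqrt{1-b(B)}\cos\theta$ of the interference term. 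The concluding range check is a secondary, more computational point.
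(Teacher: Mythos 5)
Your construction is correct and is essentially the paper's own: the stated proof is just a pointer to Aerts (2009a), whose model is exactly the $\mathbb{C}^{3}\oplus(\mathbb{C}^{3}\otimes\mathbb{C}^{3})$ construction you describe, already sketched in Section \ref{brussels} and made explicit for the \emph{Tall}/\emph{Not Tall} case in the proof of Theorem \ref{yesquantum}. Your dimension-counting argument for why orthogonality of $|A\rangle$ and $|B\rangle$ dictates the two-case definition of $a(A)$ and $b(B)$ (the inner product is pinned at its Cauchy--Schwarz ceiling on whichever of $M$ or $\mathbbmss{1}-M$ has one-dimensional range, so the other part must be able to match that modulus) is the right reading of the bookkeeping, and your closing caveat that the free parameters $(m^{2},\theta)$ only sweep a restricted range --- so representability of a given data triple is subject to the same conditions as in Aerts (2009a) --- is precisely the qualification the paper leaves implicit.
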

The term $\mu(A)\mu(B)$ is compatible with the product t--norm in (classical set) fuzzy logic (Sauerland, 2010). The term $Int^{c}_{\theta}(A,B)$ is instead the quantum interference term and it is responsible, together with the average ${\mu(A)+\mu(B) \over 2}$, of the deviations from classical expectations. For example, in the case of {\it Mint} with respect to {\it Food}, {\it Plant} and {\it Food and Plant}, we have that Theorem \ref{th2} is satisfied with $m_{Mint}^2=0.3$, $n_{Mint}^2=0.7$ and $\theta_{Mint}=50.21^{\circ}$. The numbers $m^2$ and $n^2$ estimate the weight of sectors 2 and 1, respectively, in the Fock space, that is, how the quantum logical thought is correlated with the quantum conceptual thought. This point is crucial and will be extensively discussed in Sections \ref{model} and \ref{lastmodeling} with reference to borderline cases.
\begin{theorem} \label{th4}
The membership weights $\mu(A), \mu(B)$ and $\mu(A\ {\rm or}\ B)$ of an exemplar $x$ for the concepts $A$, $B$ and $A \ {\rm or} \ B$ can be represented in a quantum probability model where
\begin{equation} \label{muAorB}
\mu(A\ {\rm or}\ B)=m^2(\mu(A)+\mu(B)-\mu(A)\mu(B))+n^2( {\mu(A)+\mu(B) \over 2}+Int^{d}_{\theta}(A,B))
\end{equation}
where $m^2$, $n^2$ and ${\theta}$ are defined as in Theorem \ref{th2}.
\begin{proof} See (Aerts, 2009a).
\end{proof}
\end{theorem}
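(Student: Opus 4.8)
The plan is to follow the same route as the proof of Theorem~\ref{th2}, replacing the conjunction measurement by the disjunction one. First I would fix the Fock space ${\cal F}={\cal H}\oplus({\cal H}\otimes{\cal H})$, represent the concepts $A$ and $B$ by orthogonal unit vectors $|A\rangle,|B\rangle\in{\cal H}$ (so that $\langle A|B\rangle=0$), represent the decision measurement ``$x$ is a member of'' by a single orthogonal projection $M$ on ${\cal H}$, and take the global state of the combined concept to be
\[
|\Psi(A,B)\rangle=m e^{i\lambda}\,|A\rangle\otimes|B\rangle+n e^{i\nu}\,{1\over\sqrt{2}}(|A\rangle+|B\rangle),\qquad 0\le m,n,\quad m^{2}+n^{2}=1,
\]
with the disjunction measurement on ${\cal F}$ represented by $M\oplus(M\otimes\mathbbmss{1}+\mathbbmss{1}\otimes M-M\otimes M)$, exactly as set up before Equation~(\ref{OR}).

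The first step is the direct evaluation of $\mu(A\ \textrm{or}\ B)=\langle\Psi(A,B)|\,M\oplus(M\otimes\mathbbmss{1}+\mathbbmss{1}\otimes M-M\otimes M)\,|\Psi(A,B)\rangle$. Because sector~1 and sector~2 are orthogonal subspaces of ${\cal F}$ and the measurement operator is block diagonal with respect to this decomposition, the cross terms between the two summands of $|\Psi(A,B)\rangle$ vanish; what remains is $m^{2}$ times the sector-2 contribution $(\langle A|\langle B|)(M\otimes\mathbbmss{1}+\mathbbmss{1}\otimes M-M\otimes M)(|A\rangle\otimes|B\rangle)=\mu(A)+\mu(B)-\mu(A)\mu(B)$, plus $n^{2}$ times the sector-1 contribution ${1\over 2}(\langle A|+\langle B|)M(|A\rangle+|B\rangle)={\mu(A)+\mu(B)\over 2}+\Re\langle A|M|B\rangle$, where I have used $\langle A|B\rangle=0$, $\langle A|A\rangle=\langle B|B\rangle=1$ and the self-adjointness of $M$. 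This reproduces Equation~(\ref{OR}), i.e. the announced formula with $Int^{d}_{\theta}(A,B)$ still written as $\Re\langle A|M|B\rangle$. (The same line also drops out of Equation~(\ref{AND}) via the footnoted identity $M\oplus(M\otimes\mathbbmss{1}+\mathbbmss{1}\otimes M-M\otimes M)=\mathbbmss{1}-\big((\mathbbmss{1}-M)\oplus(\mathbbmss{1}-M)\otimes(\mathbbmss{1}-M)\big)$ together with $\Re\langle A|(\mathbbmss{1}-M)|B\rangle=-\Re\langle A|M|B\rangle$.)

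The second step is to exhibit an explicit realization with ${\cal H}=\mathbb{C}^{3}$ for which the interference term takes the stated form $Int^{d}_{\theta}(A,B)=\sqrt{1-a(A)}\sqrt{1-b(B)}\cos\theta$. I would pick an orthonormal basis $\{|1\rangle,|2\rangle,|3\rangle\}$, let $M$ be the orthogonal projection onto a suitable coordinate subspace, and write $|A\rangle$ and $|B\rangle$ as explicit unit vectors whose component moduli are fixed by $\mu(A)$ and $\mu(B)$ and which carry a single relative phase $\theta$; a short computation then forces $\langle A|M|A\rangle=\mu(A)$, $\langle B|M|B\rangle=\mu(B)$ and $\Re\langle A|M|B\rangle=\sqrt{1-a(A)}\sqrt{1-b(B)}\cos\theta$, which, inserted into Equation~(\ref{OR}), gives exactly Equation~(\ref{muAorB}). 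The two cases in the definition of $a(A),b(B)$ — according to whether $\mu(A)+\mu(B)\le1$ or $\mu(A)+\mu(B)>1$ — are absorbed by interchanging the roles of $M$ and $\mathbbmss{1}-M$ in the sector-1 construction, so that the two square roots stay real.

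The main obstacle is not this algebra, which is routine, but checking that the single three-dimensional model is flexible enough to host the whole admissible range of triples $(\mu(A),\mu(B),\mu(A\ \textrm{or}\ B))$: one has to verify that, for given $\mu(A),\mu(B)$, the convex weights $m^{2},n^{2}$ and $\cos\theta\in[-1,1]$ can be tuned so that the right-hand side of Equation~(\ref{muAorB}) equals the measured $\mu(A\ \textrm{or}\ B)$, i.e. that $|\,\mu(A\ \textrm{or}\ B)-m^{2}(\mu(A)+\mu(B)-\mu(A)\mu(B))-n^{2}{\mu(A)+\mu(B)\over 2}\,|\le n^{2}\sqrt{1-a(A)}\sqrt{1-b(B)}$ admits a solution with $0\le m,n$ and $m^{2}+n^{2}=1$; in particular one must ensure that $|\cos\theta|$ never needs to exceed $1$ on the data in Hampton~(1988b). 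Establishing the existence of such $m,n,\theta$ is the one genuinely delicate point; once it is secured, the statement follows exactly as for Theorem~\ref{th2} in Aerts~(2009a).
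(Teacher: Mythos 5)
Your proposal is correct and follows essentially the same route as the paper: the derivation of Equation~(\ref{OR}) from the block-diagonal projection $M\oplus(M\otimes\mathbbmss{1}+\mathbbmss{1}\otimes M-M\otimes M)$ acting on $|\Psi(A,B)\rangle$, followed by the explicit $\mathbb{C}^{3}$ realization giving $Int^{d}_{\theta}(A,B)=\sqrt{1-a(A)}\sqrt{1-b(B)}\cos\theta$, is exactly the argument the paper sketches in Section~\ref{brussels} and delegates to Aerts (2009a). Your closing caveat about whether $m^{2},n^{2},\cos\theta$ can always be tuned to match a given $\mu(A\ \textrm{or}\ B)$ is a fair observation, but note the theorem as stated only asserts representability in a model of the given form, and the paper itself concedes the fit covers ``almost all'' rather than all of Hampton's data.
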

Concerning the {\it Donkey} case, we have that Theorem \ref{th4} is satisfied with $m_{Donkey}^2=0.26$, $n_{Donkey}^2=0.74$ and $\theta_{Donkey}=77.34^{\circ}$. We have explicitly written $Int^{c}_{\theta}(A,B)$ and $Int^{d}_{\theta}(A,B)$ to stress that the two terms are generally different.

Theorems \ref{th2} and \ref{th4} contain the quantum probabilistic expressions allowing the modeling of almost all of Hampton's data (1988a,b). Now, since overextension and underextension cases are so abundant in experimental tests on conjunctions and disjunctions, and since Equations (\ref{membershipweightinterference}), (\ref{interferenceterm}) and (\ref{muAorB}) are so successful in modeling the large collection of data by Hampton (1988a,b), one can naturally wonder about the underlying mechanism and dynamics determining this deviations from classical (fuzzy set) logic and probability theory and, conversely, the effectiveness of a quantum-theoretic modeling. The reason is that a new genuine quantum effect comes into play, namely emergence. Whenever a given subject is asked to estimate whether a given exemplar $x$ belongs to the vague concepts $A$, $B$, $`A \ {\rm and} \ B'$ ($`A \ {\rm or} \ B'$), two mechanisms act simultaneously and in superposition in the subject's thought. A `quantum logical thought', which is a probabilistic version of the classical logical reasoning, where the subject considers two copies of exemplar $x$ and estimates whether the first copy belongs to $A$ and (or) the second copy of $x$ belongs to $B$. But also a `quantum conceptual thought' acts, where the subject estimates whether the exemplar $x$ belongs to the newly emergent concept $`A \ {\rm and} \ B'$ ($`A \ {\rm or} \ B'$). The place whether these superposed processes can be suitably structured is the Fock space. Sector 1 of Fock space hosts the latter process, while sector 2 hosts the former, while the weights $m^2$ and $n^2$ measure the amount of `participation' of sectors 2 and 1, respectively. But, what happens in human thought during a cognitive test is a quantum superposition of both processes. As a consequence of this explanatory hypothesis, an effect, a deviation, or a contradiction, are not failures of classical logical reasoning but, rather, they are a manifestation of the presence of a superposed thought, quantum logical and quantum emergent thought.

Interestingly, the Fock space equation for disjunction can be derived from the Fock space equation for conjunction by applying de Morgan's rules to the `logical reasoning' sector of Fock space. In fact, Equation (\ref{membershipweightinterference}) can be obtained from Equation (\ref{muAorB}) by replacing $\mu(A)\mu(B)$ by $1-((1-\mu(A)(1-\mu(B)))$, which is what we would expect intuitively from the perspective introduced in our modeling. This also means that the de Morgan rules might be not satisfied when both `logical reasoning' and `emergent reasoning' are taken into account, which is potentially relevant for clarifying the situation of borderline vagueness, as we will see in Sections \ref{model} and \ref{lastmodeling}.

The explanation above might seem rather complicate. Nonetheless, we will see in Sections \ref{model} and \ref{lastmodeling} that this explanatory hypothesis accords with experimental data and can be successfully employed to describe the deviations from classical logic and probability theory observed in the borderline contradictions. Let us thus devote the next section to introduce these contradictions.

\section{An experiment measuring borderline vagueness\label{alxatib_pelletier}}
The graded nature of membership weights of exemplars of concepts is a consequence of a characteristic of concepts called `vagueness'. Besides the combination problem discussed in Section \ref{brussels}, the existence of fuzzy boundaries in concepts is responsible of the so-called `Zeno's sorites paradox' and the `borderline contradictions'. We do not dwell with the former in the present paper, for the sake of brevity. We instead introduce the latter in this section.

If we consider the concept {\it A Tall Man}, we can immediately realize that it lacks well defined extensions, since the boundary between `tall' and `not tall' is not clearly established. Moreover, a predicate like `tall man' admits `borderline cases', that is, there are exemplars where it is unclear whether the predicate applies. It is important to observe that this ambiguity cannot be removed by specifying the exact height of the person. It goes without saying that these borderline cases violate some basic rules of classical logic and probability theory, even in their fuzzy set extensions (Bonini, Osherson, Viale \& Williamson, 1999; Alxatib \& Pelletier, 2011; Ripley, 2011). Pioneeristic investigations on these borderline contradictions are due to Bonini et al., (1999), and attempts to solve these difficulties by weakening classical logical rules, proposing pragmatic logics and extensions of probability theory, or resorting to fuzzy set theory, were given by Alxatib and Pelletier (2011), Ripley (2011) and Sauerland (2010). However, despite the promising results that have been obtained, none of the approaches put forward so far provides a faithful modeling of empirical data together with a coherent explanation in terms of concept theory. In the following, we analyze in detail the experiment conducted by Alxatib and Pelletier (2011) on these difficulties.

Alxatib and Pelletier (2011) performed an experiment in which participants were presented with a picture of five suspects of differing heights in a police line-up. The suspects in the line-up were identified by the numbers \#1 ($5'4''$), \#2 ($5'11''$), \#3 ($6'6''$), \#4 ($5'7''$), and \#5 ($6'2''$) and they were shown in a picture purposely not sorted by height, but with an ordering based on names. In addition, participants received a survey with 20 questions and they had to mark one of three check boxes corresponding to three possibilities (true, false, can't tell). The 20 questions consisted of four questions for each of the suspects, as demonstrated below for a given suspect \#x. The ordering of the four questions for each of the five suspects was randomized, so that no two copies of the survey had the same order of questions. A sample of 76 subjects participated in the experiment.

\vspace{2mm}
\begin{tabular}{lccc}
\#x is tall &	True $\Box$	&	False $\Box$ &	Can't tell $\Box$ \\
\#x is not tall &	True $\Box$	&	False $\Box$ &	Can't tell $\Box$ \\
\#x is tall and not tall &	True $\Box$	&	False $\Box$ &	Can't tell $\Box$ \\
\#x is neither tall nor not tall &	True $\Box$	&	False $\Box$ &	Can't tell $\Box$ \\
\end{tabular}

\vspace{2mm}
The results are shown in Table \ref{data}. The figure shows the percentage of true and false answers to the four questions for each subject \#x.
\begin{table}[H] 
\small
\begin{center}
\begin{tabular}{|cccccccccc|}
\hline 
\multicolumn{10}{|l|}{Proposition 1: \it ``Subject \#x is tall''} \\
\multicolumn{10}{|l|}{Proposition 2: \it ``Subject \#x is not tall''} \\
\multicolumn{10}{|l|}{Proposition 3: \it ``Subject \#x is tall and not tall''} \\
\multicolumn{10}{|l|}{Proposition 4: \it ``Subject \#x is neither tall nor not tall''} \\
\hline\hline
\multicolumn{2}{|c}{Subject} & \multicolumn{1}{c}{1: True} & \multicolumn{1}{c}{1: False} &  \multicolumn{1}{c}{2: True} &  \multicolumn{1}{c}{2: False} & \multicolumn{1}{c}{3: True} & \multicolumn{1}{c}{3: False} & \multicolumn{1}{c}{4: True} & \multicolumn{1}{c|}{4: False} \\
\hline
\#1 & {\it 5'4''} & 1.3\% & 98.7\% & 94.7\% & 3.9\% & 14.5\% & 76.3\% & 27.6\% & 65.8\%   \\
\#2 & {\it 5'11''} & 5.3\% & 93.4\% & 78.9\% & 17.1\% & 21.1\% & 65.8\% & 31.6\% & 57.9\%   \\
\#3 & {\it 6'6''} & 46.1\% & 44.7\% & 25.0\% & 67.1\% & 44.7\% & 40.8\% & 53.9\% & 42.1\%   \\
\#4 & {\it 5'7''} & 80.3\% & 10.5\% & 9.2\% & 82.9\% & 28.9\% & 56.6\% & 36.9\% & 55.3\%   \\
\#5 & {\it 6'2''} & 98.7\% & 1.3\% & 0.0\% & 100.0\% & 5.3\% & 81.6\% & 6.6\% & 89.5\%  \\
\hline
\end{tabular}
\end{center}
\caption{Experimental data by Alxatib and Pelletier (2011). \label{data}}
\end{table}
 
A preliminary look at Table \ref{data} reveals that cases exist in which the statement ``$x$ is tall'' is apparently accepted (e.g., subject \#3), in which the statement  ``$x$ is tall'' is apparently rejected (e.g., subjects \#1 and \#4), but also borderline cases in which one gets ``can't tell'' as a typical answer (e.g., subjects \#2 and \#5). One can verify by pure inspection that there is a consistent preference for denying a statement over accepting its negation. Furthermore, there is a substantial preference for rejecting a negation (over accepting a statement). Another relevant result is that there are cases (about 30\%) where the statements ``$x$ is tall'' and ``$x$ is not tall'' are both considered false, whereas the proposition ``$x$ is tall and not tall'' is considered true. The same holds for the statement ``$x$ is neither tall nor not tall''. In addition, accepting the statement ``$x$ is neither tall nor not tall'' is preferred over accepting the statement ``$x$ is tall and not tall''. The latter seems intuitively plausible, but it is difficult to find a theoretical argument for it. And, indeed, Alxatib and Pelletier (2011) could hardly explain the difference, as already noticed in (Blutner, Pothos \& Bruza, 2012). Moreover, there is no evident preference for either rejecting `neither' or rejecting `and'. This is apparent for borderline cases. 

The considerations above on subjects' behavior in the evaluation of borderline cases put at stake some fundamental rules of classical logic, including the de Morgan rules. Alxatib \& Pelletier (2011) provided an exhaustive analysis of their experiment, also in the light of existing concept theories. In particular, they proposed a combination of logic, semantics and pragmatics which provides an intuitive qualitative picture of what is going on at an empirical level. However, their theoretical proposal could not explain the observed pattern for the statement ``$x$ is tall and not tall''. Moreover, they could not explain the fact that the pattern for ``$x$ is tall and not tall'' was different from the pattern for ``$x$ is neither tall nor not tall''. 

A very interesting solution for borderline cases has recently been proposed by Blutner, Pothos \& Bruza (2012) by using quantum probabilistic notions in Hilbert space. These authors point out that a quantum superposition of ``tall'' and ``not tall'' plays a fundamental role in borderline vagueness, and that the quantum interference generated by this superposition is responsible of the deviations from classicality measured by Alxatib and Pelletier (2011). In the next section, we put forward an alternative solution in Fock space which accords with the latter. We will see that the two proposals come to similar conclusions for what concerns the presence of genuinely quantum aspects as the main cause of borderline vagueness. They present instead structural differences in the derivation and use of quantum probability, and also conceptual differences in the role played by superposition and emergence. as well as in the relation between logical and conceptual thinking. A more detailed comparison will be presented in Section \ref{comparison}.

\section{Fock space for borderline contradictions\label{model}}
The Fock space model elaborated in Section \ref{brussels} will be specified in this section to the analysis of a concrete effect and to model the experimental data collected by Alxatib and Pelletier (2011)  within a quantum probabilistic framework. 

Before coming to specific results on the concrete study of borderline vagueness, let we start with a general remark. Our quantum-theoretic modeling in Fock space naturally contemplates the possibility that the classical de Morgan rules are violated in borderline cases. As we have observed in Section \ref{brussels}, in fact, the latter rules hold only in the `logical reasoning' sector of Fock space, but they are generally violated when both `logical reasoning' and `emergent reasoning' are taken into account. Hence, a possible violation of the de Morgan rules in Alxatib \& Pelletier (2011) could be accounted for in our quantum model in Fock space.

Let us now denote by $A$ the concept {\it Tall}, by $A'$ the concept {\it Not Tall} and by $A \ {\rm and} \ A'$ the conjunction {\it Tall and Not Tall}. Then, we denote by $\mu_x(A)$, $\mu_x(A')$ and $\mu_x(A\ {\rm and}\ A')$ the probabilities that a given subject $x$ belongs to the vague concept {\it Tall}, {\it Not Tall} and {\it Tall and Not Tall}, respectively. Then, we assume that, in the large number limits, $\mu_x(A)$, $\mu_x(A')$ and $\mu_x(A\ {\rm and}\ A')$ coincide with the relative frequencies measured by Alxatib and Pelletier (2011) that the propositions ``$x$ is tall'', ``$x$ is not tall'' and ``$x$ is tall and not tall'', respectively, are considered true, for each one $x$ of the exemplars, with $x=1,\ldots,5$. We know that the identification of the `amount of truth' of a given proposition with the `degree of membership' may sound not completely rigorous, though these notions are strongly connected.  Nonetheless, in absence of a concrete experiment measuring typicalities or membership weights, which we plan to perform in the next future, it is reasonable for our purposes to assume that this identification holds.

The above identification allows us to apply the conditions for a classical Kolmogorovian probability model, as well as the quantum probability rules for conjunction in Fock space, to the data collected in Alxatib and Pelletier (2011) and reported in Tables \ref{whole1} and \ref{whole2}.  
\begin{table}[H] 
\small
\begin{center}
\begin{tabular}{|ccccccc|}
\hline 
\multicolumn{2}{|c}{} & \multicolumn{1}{c}{$\mu_{x}(A)$} & \multicolumn{1}{c}{$\mu_{x}(A')$} & \multicolumn{1}{c}{$\mu_{x}(A\ {\rm and}\ A')$} & \multicolumn{1}{c}{$(\Delta_c)_x$} & \multicolumn{1}{c|}{$(k_c)_x$}  \\
\hline
\multicolumn{7}{|l|}{\it $A$=Tall} \\
\multicolumn{7}{|l|}{\it$A'$=Not Tall} \\
\hline
\#1 & {\it 5'4''} & 0.013 & 0.947 & 0.145 & 0.132 & 0.185  \\
\#2 & {\it 5'11''} & 0.461 & 0.250 & 0.447 & 0.197 & 0.736  \\
\#3 & {\it 6'6''} & 0.987 & 0.000 & 0.053 & 0.053 & 0.066  \\
\#4 & {\it 5'7''} & 0.053 & 0.079 & 0.211 & 0.158 & 0.369  \\
\#5 & {\it 6'2''} & 0.803 & 0.092 & 0.289 & 0.197 & 0.394  \\
\hline
\end{tabular}
\end{center}
\caption{Experimental data by Alxatib and Pelletier (2011) for concepts $A$={\it Tall} and $A'$={\it Not Tall}. The probabilities associated with `be tall', `to be not tall', and `to be tall and not tall' are respectively given by $\mu_x(A)$, $\mu_x(A')$ and $\mu_x(A\ {\rm and}\ A')$, for each exemplar $x$. We also reported the classical modeling conjunction factors $(\Delta_c)_x$ and $(k_c)_x$, for each exemplar $x$.\label{whole1}}
\end{table}
Let us start from Table \ref{whole1}. By comparing empirical data with Theorem \ref{th1} in Section \ref{brussels}, we can draw the conclusion that, for each exemplar $x$, a classical probability model does not exist, as stated by the following theorem.
\begin{theorem} \label{noclassical}
The membership weights $\mu_x(A), \mu_x(A')$ and $\mu_x(A\ {\rm and}\ A')$ of an exemplar $x$ for the concepts $A$, $A'$, $A \ {\rm and} \ A'$ cannot be represented in a classical probability model. 
\end{theorem}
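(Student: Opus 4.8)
The plan is to reduce the statement to a direct application of Theorem \ref{th1}, taking the second concept in that theorem to be $A'$. Theorem \ref{th1} asserts that the triple $\mu_x(A), \mu_x(A'), \mu_x(A\ {\rm and}\ A')$ admits a representation in a single Kolmogorovian probability space \emph{if and only if} \emph{both} inequalities (\ref{mindeviation}) and (\ref{kolmogorovianfactorconjunction}) hold. Hence, to prove that no classical model exists for a given exemplar $x$, it suffices to exhibit a violation of at least one of the two. I would aim at (\ref{mindeviation}): that is, I would show that the conjunction rule minimum deviation $(\Delta_c)_x=\mu_x(A\ {\rm and}\ A')-\min(\mu_x(A),\mu_x(A'))$ is strictly positive for each of the five exemplars.

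The verification is then purely arithmetical: I would substitute into the definition of $(\Delta_c)_x$ the relative frequencies recorded in Table \ref{whole1}. For example, for $x=1$ one has $\min(\mu_1(A),\mu_1(A'))=\min(0.013,0.947)=0.013$ and therefore $(\Delta_c)_1=0.145-0.013=0.132>0$; the remaining four exemplars are treated identically and yield the values $(\Delta_c)_2=0.197$, $(\Delta_c)_3=0.053$, $(\Delta_c)_4=0.158$, $(\Delta_c)_5=0.197$ displayed in the $(\Delta_c)_x$ column of Table \ref{whole1}, each of them strictly positive. So every exemplar exhibits overextension in the sense of Hampton (1988a), and by Theorem \ref{th1} no single Kolmogorovian probability space can carry the corresponding membership weights. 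For completeness I would also record that the companion numbers $(k_c)_x=1-\mu_x(A)-\mu_x(A')+\mu_x(A\ {\rm and}\ A')$ are all nonnegative (last column of Table \ref{whole1}), which locates the obstruction precisely in the failure of monotonicity / the minimum rule rather than in any failure of additivity.

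There is essentially no hard step here once Theorem \ref{th1} is available: the argument is a finite numerical check. The only point worth a line of comment is the legitimacy of invoking Theorem \ref{th1} with the second concept equal to $A'$ — but that theorem, and its proof (Aerts, 2009a, Theorem 3), are stated for arbitrary concepts $A,B$ and nowhere use any logical independence of $A$ and $B$, so the substitution is unproblematic. If a more conceptual reading is wanted, one may add the observation that a genuine borderline exemplar is by definition one for which neither ``$x$ is tall'' nor ``$x$ is not tall'' is firmly endorsed, so that $\mu_x(A)$ and $\mu_x(A')$ are both pushed down while $\mu_x(A\ {\rm and}\ A')$ stays comparatively large; from this angle $(\Delta_c)_x>0$ is a structural feature of borderline vagueness and not a numerical accident. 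This last remark is, however, only commentary and is not needed for the proof itself.
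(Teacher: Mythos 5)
Your proposal is correct and follows essentially the same route as the paper, which likewise reduces the claim to Theorem \ref{th1} and points at the $(\Delta_c)_x$ (and $(k_c)_x$) columns of Table \ref{whole1}; you merely make explicit that the strict positivity of $(\Delta_c)_x$ for all five exemplars already suffices. The arithmetic checks out for every row, so nothing further is needed.
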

\begin{proof} It is sufficient to look at the items in columns $(\Delta_c)_x$ and $(k_c)_x$ of Table \ref{whole1}, for each subject $x$.
\end{proof}
\begin{table}[H] 
\small
\begin{center}
\begin{tabular}{|ccccccccccc|}
\hline 
\multicolumn{2}{|c}{} & \multicolumn{1}{c}{$\mu_{x}(A)$} & \multicolumn{1}{c}{$\mu_{x}(A')$} & \multicolumn{1}{c}{$\mu_{x}(A\ {\rm and}\ A')$} & \multicolumn{1}{c}{$\mu_{x}(A)\cdot\mu_{x}(A')$} & \multicolumn{1}{c}{${1 \over 2}(\mu_{x}(A)+\mu_{x}(A'))$} &  \multicolumn{1}{c}{$\Re\langle A|M|A'\rangle_{x}$} &  \multicolumn{1}{c}{$m^2_x$} & \multicolumn{1}{c}{$n^2_x$} & \multicolumn{1}{c|}{$\theta_x$} \\
\hline
\multicolumn{11}{|l|}{\it $A$=Tall} \\
\multicolumn{11}{|l|}{\it$A'$=Not Tall} \\
\hline
\#1 & {\it 5'4''} & 0.013 & 0.947 & 0.145 & 0.012 & 0.480 & -0.104 & 0.635 & 0.365 & 160.00$^\circ$ \\
\#2 & {\it 5'11''} & 0.461 & 0.250 & 0.447 & 0.115 & 0.356 & 0.198 & 0.243 & 0.757 & 54.36$^\circ$ \\
\#3 & {\it 6'6''} & 0.987 & 0.000 & 0.053 &  0.000 &0.494 & 0 & 0.893 & 0.107 & 0.00$^\circ$ \\
\#4 & {\it 5'7''} & 0.053 & 0.079 & 0.211 &  0.042 & 0.421 & 0.065 & 0.000 & 1.000 & 0.00$^\circ$ \\
\#5 & {\it 6'2''} & 0.803 & 0.092 & 0.289 & 0.074 & 0.448 & -0.073 & 0.283 & 0.717 & 105.67$^\circ$ \\
\hline
\end{tabular}
\end{center}
\caption{Experimental data by Alxatib and Pelletier (2011) for concepts $A$={\it Tall} and $A'$={\it Not Tall}. The probabilities associated with `to be tall', `to be not tall', and `to be tall and not tall' are given by $\mu(A)_x$, $\mu(A')_x$ and $\mu(A\ {\rm and}\ A')_x$, for each exemplar $x$. We also reported the classical expectations $\mu_{x}(A)\cdot\mu_{x}(A')$, the averages ${1 \over 2}(\mu(A)_x+\mu(A')_x)$ and the Fock space weights $m^2_x$ and $n^2_x$, for each exemplar $x$.\label{whole2}}
\end{table}
Let us come to Table \ref{whole2}. By comparing empirical data with Theorem \ref{th2} in Section \ref{brussels}, we can draw the conclusion that, for each exemplar $x$, a quantum probability model exists in Fock space, as stated by the following theorem.
\begin{theorem} \label{yesquantum}
The membership weights $\mu_x(A), \mu_x(A')$, $\mu_x(A\ {\rm and}\ A')$ of an exemplar $x$ for the concepts $A$, $A'$ and $A \ {\rm and} \ A'$ can be represented in a quantum probability model in Fock space. 
\end{theorem}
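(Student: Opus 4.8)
The plan is to reduce the statement, via Theorem~\ref{th2}, to a finite verification. Theorem~\ref{th2} tells us that the triple $\mu_x(A),\mu_x(A'),\mu_x(A\ {\rm and}\ A')$ admits a quantum representation in the Fock space ${\cal F}={\cal H}\oplus({\cal H}\otimes{\cal H})$ exactly when there exist convex weights $m_x^2,n_x^2\ge 0$ with $m_x^2+n_x^2=1$ and an interference angle $\theta_x$ such that Equation~(\ref{membershipweightinterference}) holds, with $Int^{c}_{\theta_x}(A,A')=\sqrt{1-a(A)}\sqrt{1-b(A')}\cos\theta_x$ and $a(A),b(A')$ fixed by the case distinction in that theorem. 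So the whole content of Theorem~\ref{yesquantum} is that such a solution $(m_x^2,n_x^2,\theta_x)$ exists for every exemplar $x=1,\ldots,5$, and one admissible choice is precisely what is recorded in the last columns of Table~\ref{whole2}.

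First I would note that Equation~(\ref{membershipweightinterference}) is a single scalar equation in the two unknowns $m_x^2$ and $\theta_x$ (with $n_x^2=1-m_x^2$), so the only real question is whether its right-hand side can hit the target value $\mu_x(A\ {\rm and}\ A')$. Writing $P_x=\mu_x(A)\mu_x(A')$ for the sector-2 (\emph{logical}) term and $S_x^{\pm}=\tfrac12(\mu_x(A)+\mu_x(A'))\pm\sqrt{1-a(A)}\sqrt{1-b(A')}$ for the extreme values of the sector-1 (\emph{emergent}) term, a short continuity-plus-convexity argument shows that, as $m_x^2$ ranges over $[0,1]$ and $\theta_x$ over $[0,\pi]$, the right-hand side of Equation~(\ref{membershipweightinterference}) sweeps out exactly the interval $\big[\min(P_x,S_x^{-}),\,\max(P_x,S_x^{+})\big]$. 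Hence the theorem is equivalent to the five inequalities
\[
\min(P_x,S_x^{-})\ \le\ \mu_x(A\ {\rm and}\ A')\ \le\ \max(P_x,S_x^{+}),\qquad x=1,\ldots,5 .
\]

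The verification then splits along the case distinction defining $a(A),b(A')$. When $\mu_x(A)+\mu_x(A')\le 1$ one has $\sqrt{1-a(A)}\sqrt{1-b(A')}=\sqrt{\mu_x(A)\mu_x(A')}=\sqrt{P_x}$, and the AM--GM inequality gives $0\le S_x^{-}$ and $P_x\le S_x^{+}\le\mu_x(A)+\mu_x(A')\le 1$; moreover overextension (Theorem~\ref{noclassical}, visible in column $(\Delta_c)_x$ of Table~\ref{whole1}) forces $\mu_x(A\ {\rm and}\ A')>\min(\mu_x(A),\mu_x(A'))\ge P_x$, so the lower bound is automatic and only $\mu_x(A\ {\rm and}\ A')\le S_x^{+}$ remains to be checked. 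The interference term $\sqrt{P_x}\cos\theta_x$ is exactly what must carry the resulting surplus $\mu_x(A\ {\rm and}\ A')-\tfrac12(\mu_x(A)+\mu_x(A'))$, and reading off Tables~\ref{whole1}--\ref{whole2} one sees that in every case its absolute value stays below the headroom $\sqrt{P_x}$; an admissible $\theta_x\in[0,\pi]$ therefore exists, after which $m_x^2$ is fixed by solving the now-affine Equation~(\ref{membershipweightinterference}). The symmetric computation, with $a(A)=\mu_x(A)$ and $b(A')=\mu_x(A')$, disposes of any exemplar with $\mu_x(A)+\mu_x(A')>1$. Finally I would exhibit the resulting parameters — the columns $\Re\langle A|M|A'\rangle_x$, $m_x^2$, $n_x^2$, $\theta_x$ of Table~\ref{whole2} — and verify Equation~(\ref{membershipweightinterference}) by substitution; an explicit orthonormal realisation in $\mathbb{C}^{3}\oplus(\mathbb{C}^{3}\otimes\mathbb{C}^{3})$ along the lines of Aerts (2009a) and the Appendix then produces concrete vectors $|A\rangle,|A'\rangle$ and a projector $M$, completing the model.

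The genuine obstacle is the interval-containment check itself: one must be sure that for each of the five exemplars the overextension is small enough to be absorbed by $\sqrt{1-a(A)}\sqrt{1-b(A')}$ without forcing $|\cos\theta_x|>1$ — that is, that the non-Kolmogorovian data of Table~\ref{whole1} nonetheless lies inside the strictly larger Fock-space feasibility region. Everything past that point is bookkeeping: the angles determine the weights through a linear equation, and the passage from the numbers to an explicit Hilbert/Fock-space model is the routine construction already carried out in Aerts (2009a).
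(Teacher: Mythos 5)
Your overall strategy is sound and, in fact, more explicit about the existence question than the paper's own proof, which simply exhibits the construction from Aerts (2009a) --- concrete vectors $|A\rangle,|A'\rangle\in\mathbb{C}^3$, the projector $M=|100\rangle\langle100|+|010\rangle\langle010|$, a closed-form expression for $\theta_x$ --- and points at the fitted parameters in Table \ref{whole2}. Your reduction of Theorem \ref{yesquantum} to the interval condition $\min(P_x,S_x^-)\le\mu_x(A\ \mathrm{and}\ A')\le\max(P_x,S_x^+)$ is correct (the range of the right-hand side of Equation (\ref{membershipweightinterference}) over $(m_x^2,\theta_x)\in[0,1]\times[0,\pi]$ is exactly that interval), and so is the observation that overextension together with $P_x\le\min(\mu_x(A),\mu_x(A'))$ makes the lower bound automatic.

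The flaw is in your verification paragraph: there you silently replace the two-parameter condition by the stronger one-parameter condition $|\mu_x(A\ \mathrm{and}\ A')-\tfrac12(\mu_x(A)+\mu_x(A'))|\le\sqrt{\mu_x(A)\mu_x(A')}$, i.e.\ you check whether the surplus over the average can be carried by the interference term alone (the $m_x^2=0$ sub-case). That check is false for at least two of the five exemplars: for \#1 the surplus is $0.145-0.480=-0.335$ while the headroom is $\sqrt{0.013\cdot 0.947}\approx 0.111$, and for \#3 the headroom is $0$ (since $\mu_3(A')=0$) while the surplus is $0.053-0.494\neq 0$. In both cases the data point is reached only by letting $m_x^2>0$ pull the convex combination toward $P_x$ (the paper's fit has $m_1^2=0.635$ and $m_3^2=0.893$), so ``the angles determine the weights through a linear equation'' gets the logic backwards: $\theta_x$ and $m_x^2$ must be chosen jointly. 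The theorem itself survives --- the interval condition you derived does hold for all five rows of Table \ref{whole2} --- so the repair is simply to run the check you actually set up rather than the $m_x^2=0$ shortcut. Deferring the explicit realisation in $\mathbb{C}^3\oplus(\mathbb{C}^3\otimes\mathbb{C}^3)$ to Aerts (2009a) is acceptable; that is essentially all the paper's proof does as well.
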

\begin{proof}
Following the lines of Aerts (2009a), we can construct a quantum model in the Fock space ${\mathbb C}^3\oplus({\mathbb C}^3\otimes{\mathbb C}^3)$ which reproduces the data in Table 3 and such that
\begin{equation} \label{APRIME}
\mu_{x}(A\ {\rm and}\ A')=m_x^2\mu_x(A)\cdot\mu_x(A')+n_x^2({\mu_x(A)+\mu_x(A') \over 2}+\sqrt{\mu_{x}(A)}\sqrt{\mu_{x}(A')}\cos\theta_x)
\end{equation}
for each subject $x$, $x=1, \ldots, 5$. The initial state of the concepts $A, A'$ is represented by the unit vector 
\begin{equation}
|\Psi(A,A')\rangle=m_{x}|A\rangle \otimes |A'\rangle+{n_{x} \over \sqrt{2}}(|A\rangle+|A'\rangle)
\end{equation}
which fits the data in Table 3 with
\begin{eqnarray}
|A\rangle=\Big ( \sqrt{1-\mu_x(A)}, 0, \sqrt{\mu_x(A)} \Big ) \\
|A'\rangle=e^{i \theta_{x}}\Big ( \sqrt{\frac{\mu_x(A)\mu_x(A')}{1-\mu_x(A)}}, \sqrt{\frac{1-\mu_x(A)-\mu_x(A')}{1-\mu_x(A)}}, -\sqrt{\mu_x(A')} \Big ) \\
\theta_{x}= \arccos \Big ( \frac{
{2 \over n_x^2}\Big ( \mu_{x}(A \ \textrm{and} \ A')-m_x^2\mu_x(A)\mu_x(A') \Big )   -\mu_{x}(A)-\mu_{x}(A')}{\sqrt{1-\mu_x(A)}\sqrt{1-\mu_x(A')}} \Big )
\end{eqnarray}
The decision measurement is represented by the orthogonal projection operator $M \oplus (M \otimes M)$, where $M=|100\rangle\langle 100|+|010\rangle\langle 010|$ and $\{|100\rangle, |010\rangle, |001\rangle \}$ is the canonical basis of ${\mathbb C}^{3}$.
\end{proof}
For each subject $x$, the values of $m_x$, $n_x$ and $\theta_x$ fitting the data are reported in Table 3. We note that the Fock space model above contains three parameters $m_x$, $n_x$ and $\theta_x$ which fit the data $\mu_x(A), \mu_x(A')$ and $\mu_x(A\ {\rm and}\ A')$ and are bound by Equation (\ref{APRIME}) and by the condition $m_x^2+n_x^2=1$. Summarizing, there are two independent parameters which are used to predict the data point $\mu_{x}(A \ {\rm and} \ A')$, while $\mu_x(A)$ and $\mu_x(A')$ are inserted into the model and not predicted. We also notice that the unit vectors $|A\rangle$ and $|A'\rangle$ representing the concepts {\it Tall} and {\it Not Tall}, respectively, are orthogonal, which is compatible with quantum mechanical rules, since `tall' and `not tall' can be regarded as outcomes of a decision measurement. 

Let us analyze the results obtained in Table 3. To test the quality of the fit, we have computed the `residual sum of squares' (RSS), getting $\textrm{RSS}=\sum_{x=1}^{5} \Big ( \mu_x(A\ {\rm and}\ A')-m_x^2\mu_x(A)\cdot\mu_x(A')-n_x^2({\mu_x(A)+\mu_x(A') \over 2}+\sqrt{\mu_x(A)}\sqrt{\mu_x(A')}\cos\theta_x)\Big )^{2}=2.26\cdot 10^{-26}$. Then, we note that, for each subject $x$, the average ${1\over 2}(\mu_x(A)+\mu_x(A'))$ and the interference term $\sqrt{\mu_x(A)}\sqrt{\mu_x(A')}\cos\theta_x$ are, in modulus, higher than the corresponding classical probability term 
$\mu_x(A)\cdot\mu_x(A')$. As a consequence, we expect that the effect of emergence is generally prevailing over the logical reasoning. To prove this, we have computed, for each subject $x$, the sum ${1\over 2}(\mu_x(A)+\mu_x(A'))+\sqrt{\mu_x(A)}\sqrt{\mu_x(A')}\cos\theta_x$, thus finding 0.251, 0.519, 0.380, 0.287, 0.395, for the five subjects. After showing that this set has the same variance as the set under $\mu_x(A)\cdot\mu_x(A')$ in Table 3, by means of an F-test, we have then performed an independent 2-sample t-test for these sets of data, finally getting a p-value equal to 0.00013 (1-tail) and 0.000259 (2-tail). This shows that the corresponding data are significantly different from each other. In any case, there are subjects for which the weight of sector 1 is much higher than the weight of sector 2, as follows.

Let us consider subject \#2 as an example. Here, we have an average equal to 0.356, an interference contribution equal to 0.198, for an interference angle equal to 54.36$^\circ$, while the second sector weight is 0.243 and the first sector is 0.757. The explanation is simple, in this case. Whenever a person is asked to estimate whether subject \#2 belongs to the vague concept {\it Tall}, the person first wonders whether subject \#2 belongs to the new emergent concept {\it Tall and Not Tall}, and then the person considers two exemplars of subject \#2 and applies the logical rules wondering whether the first copy belongs to {\it Tall} and the second copy belongs to {\it Not Tall}. But, one can see that the `sector 1 weight' prevails over the `sector 2 weight', and that the average factor prevails over the  interference factor. 

We can even be more explicit to illustrate how the quantum logical thought and the quantum conceptual thought are related in this experiment. Consider, again, subject \#2. In sector 2 of Fock space, two identical copies of subject \#2 are taken into account by a given test participant. One is confronted with {\it Tall}, the other with {\it Not Tall}. If both confrontations lead to acknowledgement of membership, the conjunction is satisfied. We can immediately recognize the classical logical calculus here, except that things are probabilistisc, or fuzzy. The dynamics in sector 1 is instead different. In this sector, a test person estimates whether subject \#2 is a good exemplar of the new emergent concept {\it Tall and Not Tall}. By comparing $m_{2}^2$ with $n_{2}^{2}$, we can see that the quantum conceptual thought dominates over the quantum logical thought, in this case. A similar behavior can be noticed for subject \#5, where $n_5^2=0.717$, $m_5^2=0.283$ and both average and interference terms prevail, in modulus, over the classical probability term.

The above treatment can be repeated for the other subjects involved in the experiment. For example, the other way around occurs for subject \#3. In this case, $m_{3}^2=0.893$ prevails over $n_{3}^{2}=0.107$, which means that quantum logical thought dominates over the quantum conceptual thought, in that case. As a consequence, a classical logical reasoning is the dominant dynamics in this specific case. But, the whole decision process happens in a quantum superposition of the two processes in Fock space, as we have anticipated in Section \ref{brussels}.

We would like to recall that it is not surprising that the non-classical (quantum) effects of superposition, interference, emergence occur in sector 1 of Fock space. Indeed, it is the unit vector ${1 \over \sqrt{2}}(|A\rangle+|A'\rangle)$ in this sector which represents then new emergent concept {\it Tall and Not Tall}. In our analysis of the experiment by Alxatib and Pelletier (2011), the deviations from classical expectations in borderline cases can be mainly explained in terms of these effects of superposition and interference between {\it Tall} and {\it Not Tall}, while no effect of entanglement between {\it Tall} and {\it Not Tall} appears, which could be described in sector 2. This means that the concepts {\it Tall} and {\it Not Tall} are represented by the product vector $|A\rangle\otimes|A'\rangle$ in sector 2, and this sector intuitively only models a classical (logical) behavior in this borderline case. But, entanglement systematically occurs in concepts which is generally represented in sector 2, as it has been proved in Aerts \& Sozzo (2011) and Aerts, Gabora \& Sozzo (2012). In this perspective, new experiments on borderline contradictions might also reveal entanglement between a concept and its negation.

To conclude this section, the quantum model in Fock space presented in this paper for the borderline contradictions identified by Alxatib and Pelletier (2011) shows that, the quantum conceptual dynamics generally prevails over the quantum logical dynamics. This is relevant, from our perspective, since it reveals that these `contradictions' should not be regarded as pure `deviations from classicality' but, rather, as effects due to the dominant dynamics, which is emergence. Hence, the possible violation of de Morgan's rules in borderline cases should be considered as a hint toward the acceptance of emergence as dominant dynamics and classical logical reasoning as secondary dynamics. This is exactly what happens for the deviations from classical (fuzzy set) logic and probability theory in concept combinations discussed in Section \ref{brussels}, and for the `effects' typically experimented in decision theory, such as the {\it conjunction fallacy} and the {\it disjunction effect} (Aerts, 2009a; Aerts \& D'Hooghe, 2009; Busemeyer \& Bruza, 2012; Busemeyer \& Lambert-Mogilansky, 2009; Khrennikov, 2010).

\section{A complete modeling of experimental data\label{lastmodeling}}
The modeling in the previous section does not provide a complete representation of the whole set of data collected by Alxatib \& Pelletier (2011). In particular, the data on the truth value of the setence ``$x$ is neither tall nor not tall'' have not been modeled yet. There is a reason for that, and it has to do that the quantum-theoretic approach presented in Section \ref{brussels} was originally elaborated to cope with simple concept combinations of the form $A\ {\rm and}\ B$ or $A\ {\rm or}\ B$. Hence, it is not trivial that such an approach could model also combinations of the form ${\rm neither} \ A \ {\rm nor} \ B$. We will see in this section that our Fock space modeling in Section \ref{model} is general enough to cope with sentences of the form ``$x$ is neither tall nor not tall'', due to the fact that a concept and its negation are involved in this sentence. But, before undertaking this task, we need to anticipate an important remark, as follows.

When analyzing the semantic content of ``$x$ is neither tall nor not tall'', one must stress that the most natural interpretation of this sentence seems to be ``$x$ is (not tall) and (not (not tall))''. Hence, one should experimentally test the membership weights of a subject $x$ with respect to the concepts {\it Not Tall}, {\it Not (Not Tall)} and {\it Not Tall and Not (Not Tall)}. And, should the latter test be performed, it could reveal that the statistics on ``$x$ is neither tall nor not tall'' is different from the statistics on ``$x$ is (not tall) and (not (not tall))''. Moreover, if we accept such an identification, then one cannot generally use the data on the concept {\it Tall} also for the concept {\it Not (Not Tall)}, since these two concepts are not generally equivalent, as experimentally tested by Hampton (1988b). Notwithstanding this, we remind that sector 1 of Fock space represents a quantum conceptual process, while sector 2 of Fock space represents a quantum logical process in our construction. This entails that it is reasonable to use the same rules that hold in quantum logic, i.e. identifying a quantum proposition with its double negation, when dealing with concepts in this sector. As a consequence, in sector 2, we can treat the conceptual combination {\it Not Tall and Not (Not Tall)} as {\it Not Tall and Tall}, thus using the quantum probabilistic formula for the conjunction in Equation (\ref{APRIME}). More explicitly, we repeat our reasoning in Section \ref{model} and construct a quantum model in the Fock space ${\mathbb C}^3\oplus({\mathbb C}^3\otimes{\mathbb C}^3)$ which reproduces the whole set of data in Table 3, as follows.

In sector 1 of Fock space, the unit vector ${1 \over \sqrt{2}}(|A'\rangle+|A\rangle)$ describes the new emergent concept {\it Neither A Nor A'}, while in sector 2 of Fock space, one should consider the conceptual combination by the tensor product vector $|A'\rangle \otimes |A\rangle$. Thus, the initial state of the concepts $A, A'$ is now represented by the unit vector 
\begin{equation}
|\Psi(A,A')\rangle=m_{x}|A'\rangle \otimes |A\rangle+{n_{x} \over \sqrt{2}}(|A'\rangle+|A\rangle)
\end{equation}
while the decision measurement is represented by the orthogonal projection operator $M \oplus (M \otimes M)$, where $M=|100\rangle\langle 100|+|010\rangle\langle 010|$. This means that we can represent the data on the conceptual combination {\it Neither Tall Nor Not Tall} by the quantum probabilistic formula
\begin{equation} \label{NEITHER}
\mu_{x}({\rm neither} \ A \ {\rm nor} \ A')=m_x^2\mu_x(A)\cdot\mu_x(A')+n_x^2({\mu_x(A)+\mu_x(A') \over 2}+\sqrt{\mu_x(A)}\sqrt{\mu_x(A')}\cos\theta_x)
\end{equation}
for each subject $x$, $x=1, \ldots, 5$. We stress that Equation (\ref{NEITHER}) is only formally equivalent to Equation (\ref{APRIME}), since the values of $m_x^2$, $n_x^2$ and $\theta_x$ are different, as we can see by considering the following table.
\begin{table}[H] 
\small
\begin{center}
\begin{tabular}{|cccccccc|}
\hline 
\multicolumn{2}{|c}{} & \multicolumn{1}{c}{$\mu_{x}(A)$} & \multicolumn{1}{c}{$\mu_{x}(A')$} & \multicolumn{1}{c}{$\mu_{x}({\rm neither}\ A\ {\rm nor}\ A')$} &  \multicolumn{1}{c}{$m^2_x$} & \multicolumn{1}{c}{$n^2_x$} & \multicolumn{1}{c|}{$\theta_x$} \\
\hline
\#1 & {\it 5'4''} & 0.013 & 0.947 & 0.276 & 0.274 & 0.726 & 160.00$^\circ$ \\
\#2 & {\it 5'11''} & 0.461 & 0.250 & 0.539 & 0.170 & 0.830 & 37.28$^\circ$ \\
\#3 & {\it 6'6''} & 0.987 & 0.000 & 0.066 & 0.866 & 0.134 & 0.00$^\circ$ \\
\#4 & {\it 5'7''} & 0.053 & 0.079 & 0.316 & 0.000 & 1.000 & 0.00$^\circ$ \\
\#5 & {\it 6'2''} & 0.803 & 0.092 & 0.369 & 0.241 & 0.759 & 86.79$^\circ$ \\
\hline
\end{tabular}
\end{center}
\caption{Experimental data by Alxatib and Pelletier (2011) for concepts $A$={\it Tall} and $A'$={\it Not Tall}. The probabilities associated with `to be tall', `to be not tall', and `to be neither tall nor not tall' are given by $\mu(A)_x$, $\mu(A')_x$ and $\mu_{x}({\rm neither} \ A \ {\rm nor} \ A')$, for each exemplar $x$.\label{whole3}}
\end{table}
Also in this case, we have computed the residual sum of squares (RSS), getting $\textrm{RSS}=3.61\cdot 10^{-27}$. By comparing Tables 3 and 4, one realizes at once that similar conclusions can be attained with repsect to the prevalence of sector 1 to sector 2 of Fock space, hence to the prevalence of the quantum conceptual thought over the quantum logical thought.

The previous analysis enables to observe that our Fock space modeling can handle situations of the form ``$x$ is neither $A$ nor $B$'', in the specific case when $B$ is the conceptual negation of $A$. We conclude this section by observing that we plan to investigate in the future whether it is possible to further generalize our representation. In this perspective, one should (i) find an emergent superposition state in sector 1 of Fock space for the concept {\it Neither $A$ Nor $B$}, (ii) find a quantum logical expression of {\it Neither $A$ Nor $B$}, which would consist of a suitable combination of $A$ and $B$ in the tensor product Hilbert space. We do not insist on this point, for the sake of brevity.

It must be recalled, to conclude, that the modeling presented in Aerts (2009a) model enables in principle the representation of more complex combinations, exactly because in sector 1 of Fock space `only the emergence of a new concept' plays a role in what happens there. And of course, whatever combination is made of the concepts $A$ and $B$, always a new concept emerges. Hence, also for the concepts {\it Neither $A$ Nor $B$}, {\it $A$ Different From $B$}, {\it $A$ Similar To $B$}, etc. Of course, in the first section of Fock space there is a limitation to what these combinations can be, in the sense that they need to be interpretable within quantum logic. This would not be the case, for example, for {\it $A$ Different From $B$}, or {\it $A$ Similar To $B$}, but it is the case for {\it Neither $A$ Nor $B$}.

\section{Some methodological remarks on the predictions of the Fock space model\label{methods}}
We conclude this paper by providing a brief qualitative and quantitative comparison between the quantum model in Fock space elaborated in this paper and the quantum model in Hilbert space for borderline vagueness developed in Blutner, Pothos \& Bruza (2012). But, first, we think it useful to discuss some general considerations on the methods employed here, which theoretically rests on the results in Aerts (2009a).

One could at first sight observe that the model worked out here, and also the quantum-theoretic framework presented in Aerts (2009a), are descriptive of the corresponding experimental data, but they hardly have explanatory power, due to the fact that they do not supply predictions for future experiments, hence they cannot be falsified. This point is very important, in our opinion, and should be discussed carefully. The model we have presented here has not been devised to directly reproduce the data by Alxatib and Pelletier, thus fitting them. It has rather applied the general perspective on combinations of two concepts in Aerts (2009a) to the case in which the studied combination is the conjunction of a concept and its negation. In this sense, our specialized model can be seen as a confirmation that the quantum-theoretic modeling in Aerts (2009a), which faithfully reproduces Hampton's data (1988a,b), can also be successfully employed to deal with different data sets and in a completely different conceptual framework. The second point is that both models have been constructed to statistically reproduce actual human choices and decisions on the basis of a hypothesized two-layered structure, logical and conceptual, of human thought. This means that our model in Fock space cannot reproduce any arbitrary set of data on borderline cases. For example, one could consider the values for the memebership weights $\mu_x(A)$ and $\mu_x(A')$ collected in Table 2 and the set of values $\mu_{1}(A \ \textrm{and} \ A')=0.658$,
 $\mu_{2}(A \ \textrm{and} \ A')=0.553$, $\mu_{3}(A \ \textrm{and} \ A')=0.579$, $\mu_{4}(A \ \textrm{and} \ A')=0.895$, $\mu_{5}(A \ \textrm{and} \ A')=0.421$ as membership weights of $\mu_{x}(A \ \textrm{and} \ A')$ (we take the real data false answers to the sentence ``$x$ is neither tall nor not tall'' on purpose, using them in place of ``$x$ is tall and not tall''). By following a procedure similar to the one in Section \ref{model}, one can prove that a Fock space model does not exist in this case which fits the new arbitrary data.

We believe that the constraints on our modeling are not imposed by the number of parameters but, rather, by the two-layered Fock space structure assumed in it. In this sense, the constraints we are faced with are similar to the constraints imposed to quantum mechanics by microscopic physics. We agree that further psychological experiments should be performed on cognitive entities to test and eventually falsify these quantum models. For example, they could be put at stake by performing experiments which reveal that a Hilbert space structure is prohibited and cannot work. It could be the case, since we believe that conceptual entities are less crystallized structures than microscopic quantum entities. These new cognitive experiments could, e.g., involve pieces of texts instead of individual exemplars. Concerning the specific Fock space model elaborated in the present paper, it can describe the Alxatib \& Pelletier (2011) data purely in terms of quantum interference and superposition between the concept {\it Tall} and the concept {\it Not Tall}. But the model is also compatible with a situation in which {\it Tall} and {\it Not Tall} also entangle in the standard quantum sense. We expect that superposition, interference and emergence are not the only quantum effects playing a role in borderline vagueness, but also entanglement might play a role here, and even an important one. We plan to investigate this aspect in future work to see whether also entanglement is present in borderline contradictions. But, this will obviously require a suitable choice of the cognitive tests (membership weights on Likert scale, collapse typicality measurements, etc.) to be performed.

Another situation we expect in our quantum model is the following. In principle, for a given subject $x$, the probabilities  $\mu_{x}(A \ \textrm{and} \ A')$ and $\mu_{x}(A' \ \textrm{and} \ A)$ are different in our model. Indeed, although in sector 1 of Fock space the concepts $A \ \textrm{and} \ A'$ and $A' \ \textrm{and} \ A$ are represented by the same superposition state, the phase of this state is different in the two cases, leading to different interference angles, hence to different values for the probabilities. This means that eventual order effects would be automatically accounted for by our model if the combination {\it Not Tall and Tall} is measured instead of {\it  Tall and Not Tall}. These order effects do not seem to occur in Alxatib \& Pelletier (2011),  but they might be revealed by new experiments in which {\it Tall}  and {\it Not Tall} are measured in a different order. 

To conclude this section, we believe that the quantum model in Fock space presented in this paper and, more generally, the quantum theoretic-modeling in Aerts (2009a), do not only describe actual experimental situations, but they also provide an explanation of them in terms of genuine quantum effects.

\section{A comparison between two quantum probability models \label{comparison}}
In this paper, we have elaborated a quantum probability model in Fock space for the data collected by Alxatib and Pelletier. Recently, Blutner, Pothos and Bruza (2012) have worked out an alternative quantum model in Hilbert space for the borderline vagueness occurring in the same experiment. It seems to us interesting to provide a qualitative and quantitative comparison between the two models, analyzing analogies and differences, to see to what extent they are compatible.

In Blutner, Pothos \& Bruza (2012), the authors firstly introduce two dichotomic random variables $\textbf{T}$ and $\textbf{F}$ which can take the values 0,1, instead of working with a single trichotomic ($T,F,N$, N=Null) random variable ``Truth''. The combination $\textbf{T}\textbf{\underline{F}}$ ($T=1,F=0$) corresponds to $T$, and so on, while the combination $\textbf{\underline{T}} \ \textbf{\underline{F}}$ is excluded. Then, they elaborate a classical Kolmogorovian probability model for vagueness, based on the ideas of Alxatib and Pelletier, and show that it cannot quantitatively reproduce original data. Successively, the authors come to a quantum probability model in which the state of ``Tallness'' of a subject $x$ is represented by a unit vector $|\psi_{x}\rangle$ of a Hilbert space, while the decision measurement is represented by a self-adjoint operator whose spectral decomposition contains suitable products of the projection operators $\textbf{T}$, $\textbf{\underline{T}}=\mathbbmss{1}-\textbf{T}$, $\textbf{F}$, $\textbf{\underline{F}}=\mathbbmss{1}-\textbf{F}$. For example, the operator ${1 \over 2}(\textbf{T}\textbf{\underline{F}}\textbf{T}+\textbf{F}\textbf{\underline{T}}\textbf{F})$ is associated with the outcome ``true'' in the decision process, and so on. The crucial assumption in this model is that the projection operators $\textbf{T}$ and $\textbf{F}$ do not commute, which implies that interference terms appear in the probabilities (calculated using the Born rule). This enables faithful modeling for both acceptance and rejection the truth of the propositions ``$x$ is tall'', ``$x$ is not tall'' and ``$x$ is tall and not tall''. The authors write explicitly that this model does not explain the difference between ``$x$ is tall and not tall'' and ``$x$ is neither tall and not tall''. They should reduce to the statement ``$x$ is tall and $x$ is not tall'' versus ``$x$ is not tall and $x$ is tall'' (assuming the law of double negation) (Blutner, Pothos \& Bruza, 2012).

The model briefly summarized above and the one introduced in Sections \ref{model} and \ref{lastmodeling} have some important analogies. We will sketch them in the following, proceeding by steps.

(i) Both approaches accept that a classical set (fuzzy set) modeling cannot cope with borderline vagueness.

(ii) Both models enable to show that a classical Kolmogorovian probability framework cannot satisfactorily describe the data collected by Alxatib and Pelletier.

(iii) Both approaches show that human decisions in this experimental situation can be modeled in a quantum probabilistic framework.

(iv) Both models describe borderline vagueness as an effect of a quantum interference phenomenon between the concepts {\it Tall} and {\it Not Tall}.

There are however some relevant differences between the two models, which can be resumed as follows, again proceeding by steps.

(i) The two proposals seemingly diverge from a technical point of view, since the model presented here is set in a Fock space rather than in a Hilbert space, and each component Hilbert space has a specific role in our treatment. In this respect, the authors wonder whether a Hilbert space modeling exists for Hampton's data too, instead of resorting to a Fock space model, as done in Aerts (2009a). We also observe that the specific representations of states and decision measurements are different in the two models. More specifically, we remind that the concepts {\it Tall} and {\it Not Tall} are represented by orthonormal vectors (equivalently, by orthogonal one-dimensional projection operators) in this paper, while the authors in Blutner, Pothos \& Bruza (2012) assume that the projection operators $\textbf{T}$ and $\textbf{F}$ are neither orthogonal nor commuting, e.g., $\textbf{T}\textbf{\underline{F}}\ne 0$ and $ \textbf{T}\textbf{\underline{F}} \ne \textbf{F}\textbf{\underline{T}}$.

(ii) The presence of a specifically structured two-layered form in human thought, logical and conceptual, is assumed to play a fundamental role in the formation of borderline contradictions in human decisions. This aspect is not present in Blutner, Pothos \& Bruza (2012), which is responsible of the structural difference in the specific quantum models. The distinction of two modes of thought has been proposed in psychological literature. Already Sigmund Freud proposed considering thought as consisting of two processes, which he called `primary' and `secondary' (Freud, 1899). Then, William James introduced the idea of `two legs of thought', `conceptual' and `perceptual' (James, 1910). Jean Piaget, in his study of child thought, introduced `directed or intelligent thought' and `autistic thought' (Piaget, 1923). Finally, Jerome Bruner introduced the `paradigmatic mode of thought' and the `narrative mode thought' (Bruner, 1990) (see also (Sloman, 1996). It is the specific quantum-based structure presented here which distinguishes our distinction from the existing ones.

(iii) In the present paper, we have not considered explicitly the cases of rejection of a statement ``$x$ is tall'', ``$x$ is not tall''. This is because we were not interested in a complete treatment or modeling of the data collected by Alxatib and Pelletier. We were more concerned with showing that the general modeling in Aerts (2009a) can be particularized in such a way that borderline statements can be regarded as a conjunction of a concept and its negation, as the discussion in Section \ref{model} shows. On the contrary, Blutner, Pothos and Bruza (2012) supply a more extensive and complete analysis of Alxatib and Pelletier data, where also the situation of rejection of a statement is quantitatively described. However, we plan to investigate in the future also these aspects which are lacking in the modeling presented here. 

(iv) In Blutner, Pothos \& Bruza (2012), the authors accept that the de Morgan rules are satisfied, which does not affect their quantum modeling. This aspect is more controversial in the model presented in Sections \ref{model} and \ref{lastmodeling}. The two-layered structure of human thought suggests that typical logical relations between concepts should be satisfied only in sector 2 of Fock space, while these logical relations, including de Morgan's rules, could be violated if both sectors are considered. As noticed in (iii), the absence of a complete quantitative treatment of the empirical data in Section \ref{alxatib_pelletier} does not allow us to provide a sharp answer to the question whether de Morgan's laws are violated by our modeling in this specific case. It, moreover, does not allow to conclude that de Morgan's laws can be assumed without affecting the structure of our modeling. 

(v) Order effects seemingly do not play a role in Blutner, Pothos \& Bruza (2012). On the contrary, we expect that order effects might appear if suitable experiments are performed on borderline vagueness, as noticed in Section \ref{methods}. Should this be the case, our modeling can automatically account for the appearence of these order effects.

The analysis in (i)--(v) allows one to conclude that the two quantum models show deep differences but, in absence of a further analysis, which we plan to perform in the next future, the question of the compatibility between them remains an open question.


\appendix
\section*{Appendix. The Basics of Quantum-theoretic Modeling\label{appA}}
When quantum theory is applied for modeling purposes, each entity considered -- in our case a concept -- is associated with a complex Hilbert space ${\cal H}$, which is a vector space over the field ${\mathbb C}$ of complex numbers, equipped with an inner product $\langle \cdot |  \cdot \rangle$, that maps two vectors $\langle A|$ and $|B\rangle$ to a complex number $\langle A|B\rangle$. We denote vectors by using the bra-ket notation introduced by Paul Adrien Dirac, one of the founding fathers of quantum mechanics (Dirac, 1958). Vectors can be kets, denoted by $\left| A \right\rangle $, $\left| B \right\rangle$, or bras, denoted by $\left\langle A \right|$, $\left\langle B \right|$. The inner product between the ket vectors $|A\rangle$ and $|B\rangle$, or the bra-vectors $\langle A|$ and $\langle B|$, is realized by juxtaposing the bra vector $\langle A|$ and the ket vector $|B\rangle$, and $\langle A|B\rangle$ is also called a bra-ket, and it satisfies the following properties: (i) $\langle A |  A \rangle \ge 0$; (ii) $\langle A |  B \rangle=\langle B |  A \rangle^{*}$, where $\langle B |  A \rangle^{*}$ is the complex conjugate of $\langle A |  B \rangle$; $\langle A |(z|B\rangle+t|C\rangle)=z\langle A |  B \rangle+t \langle A |  C \rangle $, for $z, t \in {\mathbb C}$,
where the sum vector $z|B\rangle+t|C\rangle$ is called a `superposition' of vectors $|B\rangle$ and $|C\rangle$ in the quantum jargon. From (ii) and (iii) follows that it is linear in the ket and anti-linear in the bra, i.e. $(z\langle A|+t\langle B|)|C\rangle=z^{*}\langle A | C\rangle+t^{*}\langle B|C \rangle$. We recall that the absolute value of a complex number is defined as the square root of the product of this complex number times its complex conjugate. In formulas, $|z|=\sqrt{z^{*}z}$. Moreover, a complex number $z$ can either be decomposed into its cartesian form $z=x+iy$, or into its goniometric form $z=|z|e^{i\theta}=|z|(\cos\theta+i\sin\theta)$.  As a consequence, we have $|\langle A| B\rangle|=\sqrt{\langle A|B\rangle\langle B|A\rangle}$. We define the `length' of a ket (bra) vector $|A\rangle$ ($\langle A|$) as $|| |A\rangle ||=||\langle A |||=\sqrt{\langle A |A\rangle}$. A vector of unitary length is called a `unit vector'. We say that the ket vectors $|A\rangle$ and $|B\rangle$ are `orthogonal' and write $|A\rangle \perp |B\rangle$ if $\langle A|B\rangle=0$. We have introduced the necessary mathematics to describe the first modeling rule of quantum theory, which is the following. 

\medskip
\noindent{\it First quantum modeling rule:} A state of an entity -- in our case a concept -- modeled by quantum theory is represented by a ket vector $|A\rangle$ with length 1, i.e. $\langle A|A\rangle=1$.

\medskip
\noindent
An orthogonal projection $M$ is a linear function on the Hilbert space, hence $M: {\cal H} \rightarrow {\cal H}, |A\rangle \mapsto M|A\rangle$, which is Hermitian and idempotent, which means that for $|A\rangle, |B\rangle \in {\cal H}$ and $z, t \in {\mathbb C}$ we have (i) $M(z|A\rangle+t|B\rangle)=zM|A\rangle+tM|B\rangle$ (linearity); (ii) $\langle A|M|B\rangle=\langle B|M|A\rangle$ (hermiticity); and (iii) $M \cdot M=M$ (idempotency). The identity, mapping each vector on itself, is a trivial orthogonal projection, denoted by $\mathbbmss{1}$. We say that two orthogonal projections $M_k$ and $M_l$ are orthogonal, if each vector contained in $M_k({\cal H})$ is orthogonal to each vector contained in $M_l({\cal H})$, and we write in this case $M_k \perp M_l$. The orthogonality of the projection operators $M_{k}$ and $M_{l}$ can also be expressed by $M_{k}M_{l}=0$, $0$ being the null operator. A set of orthogonal projection operators $\{M_k\ \vert k=1,\ldots,n\}$ is called a spectral family, if all projectors are mutually orthogonal, i.e. $M_k \perp M_l$ for $k \not= l$, and their sum is the identity, i.e. $\sum_{k=1}^nM_k=\mathbbmss{1}$. This gives us the necessary mathematics to describe the second modeling rule.

\medskip
\noindent
{\it Second quantum modeling rule:} A measurable quantity of an entity -- in our case a concept -- modeled by quantum theory, and having a set of possible real values $\{a_1, \ldots, a_n\}$ is represented by a spectral family $\{M_k\ \vert k=1, \ldots, n\}$ in the following way. If the entity is in a state represented by the vector $|A\rangle$, this state is changed into a state represented by one of the vectors 
\begin{equation}
|A_k\rangle=\frac{M_k|A\rangle}{||M_k|A\rangle||} \nonumber
\end{equation} 
with probability $\langle A|M_k|A\rangle=||M_k |A\rangle||^{2}$. In this case the value of the quantity is $a_k$, and the change of state taking place is called collapse in the quantum jargon. The expression $\langle A|M_k|A\rangle$ is also the probability of getting the outcome $a_k$ in a measurement of the quantity on the entity -- in our case a concept (\emph{Born rule}).

\medskip
\noindent
The tensor product ${\cal H}_{A} \otimes {\cal H}_{B}$ of two Hilbert spaces ${\cal H}_{A}$ and ${\cal H}_{B}$ is the Hilbert space generated by the set $\{|A_i\rangle \otimes |B_j\rangle\}$, where $|A_i\rangle$ and $|B_j\rangle$ are vectors of ${\cal H}_{A}$ and ${\cal H}_{B}$, respectively, which means that a general vector of this tensor product is of the form $\sum_{ij}|A_i\rangle \otimes |B_j\rangle$. This gives us the necessary mathematics to introduce the third modeling rule.

\medskip
\noindent
{\it Third quantum modeling rule:} A state $p$ of a compound entity -- a combined concept -- is represented by a unit vector $|C\rangle$ of the tensor product ${\cal H}_{A} \otimes {\cal H}_{B}$ of the two Hilbert spaces ${\cal H}_{A}$ and ${\cal H}_{B}$ containing the vectors that represent the states of the component entities -- concepts.

\medskip
\noindent
The above means that we have $|C\rangle=\sum_{ij}c_{ij}|A_i\rangle \otimes |B_j\rangle$, where $|A_i\rangle$ and $|B_j\rangle$ are unit vectors of ${\cal H}_{A}$ and ${\cal H}_{B}$, respectively, and $\sum_{i,j}|c_{ij}|^{2}=1$. We say that the state $p$ represented by $|C\rangle$ is a product state if it is of the form $|A\rangle \otimes |B\rangle$ for some $|A\rangle \in {\cal H}_{A}$ and $|B\rangle \in {\cal H}_{B}$. Otherwise, $p$ is called an `entangled state'.

\medskip
\noindent
Fock space is a specific type of Hilbert space, originally introduced in quantum field theory. For most states of a quantum field the number of identical quantum entities is not an actuality, i.e. predictable quantity. Fock space copes with this situation in allowing its vectors to be superpositions of vectors pertaining to sectors for fixed numbers of identical quantum entities. Such a sector, describing a fixed number of $j$ identical quantum entities, is of the form ${\cal H}\otimes \ldots \otimes{\cal H}$ of the tensor product of $j$ identical Hilbert spaces ${\cal H}$. Fock space $F$ itself is the direct sum of all these sectors, hence
\begin{equation} \label{fockspace}
{\cal F}=\oplus_{k=1}^j\otimes_{l=1}^k{\cal H} \nonumber
\end{equation}
For our modeling we have only used Fock space for the `two' and `one quantum entity' case, hence ${\cal F}={\cal H}\oplus({\cal H}\otimes{\cal H})$. This is due to considering only combinations of two concepts. A unit vector $|F\rangle \in {\cal F}$ is then written as $|F\rangle = ne^{i\gamma}|C\rangle+me^{i\delta}(|A\rangle\otimes|B\rangle)$, where $|A\rangle, |B\rangle$ and $|C\rangle$ are unit vectors of ${\cal H}$, and such that $n^2+m^2=1$. For combinations of $j$ concepts, the general form of Fock space expressed in equation (\ref{fockspace}) will have to be used.

\section*{Acknowledgments.} The author is greatly indebted with  Prof. Diederik Aerts for reading the manuscript and providing a number of valuable remarks and suggestions. The author wishes to thank Profs. Reinhard Blutner and Emmanuel Pothos for bringing to the attention of the author the themes connected with borderline vagueness and for deeply discussing the content of the present paper.

\section*{References}
\begin{description}
\vspace{-0.2cm}
\setlength{\itemsep}{-2mm}


\item Aerts, D. (1986). A possible explanation for the probabilities of quantum mechanics. {\it Journal of Mathematical Physics 27}, 202--210.


\item Aerts, D. (1999). Foundations of quantum physics: A general realistic and operational approach. {\it International Journal of Theoretical Physics 38}, 289--358.  

\item Aerts, D. (2009a). Quantum structure in cognition. {\it Journal of Mathematical Psychology 53}, 314--348.

\item Aerts, D. (2009b). Quantum particles as conceptual entities: A possible explanatory framework for quantum theory. {\it Foundations of Science 14}, 361--411. 

\item Aerts, D., \& Aerts, S. (1995). Applications of quantum statistics in psychological studies of decision processes. {\it Foundations of Science 1}, 85-97.

\item Aerts, D., Aerts, S., Broekaert, J., \& Gabora, L. (2000). The violation of Bell inequalities in the macroworld. {\it Foundations of Physics 30}, 1387--1414.


\item Aerts, D., Aerts, S., \& Gabora, L. (2009). Experimental evidence for quantum structure in cognition. {\it Lecture Notes in Computer Science 5494}, 59--70.

\item Aerts, D., Broekaert, J. and Smets, S. (1999). A quantum structure description of the liar paradox. {\it International Journal of Theoretical Physics 38}, 3231--3239.

\item Aerts, D., Broekaert, J., Gabora, L., \& Sozzo, S. (2013). Quantum structure and human thought. {Behavioral and Brain Sciences 36}, 274--276.

\item Aerts, D., Broekaert, J., Gabora, L., \& Veloz, T. (2012). The Guppy Effect as Interference. In the {\it Proceedings of the sixth International Symposium on Quantum Interaction}, 27-29 June 2012, Paris, France.


\item Aerts, D., \& Czachor, M. (2004). Quantum aspects of semantic analysis and symbolic artificial intelligence. {\it Journal of Physics A: Mathematical and Theoretical 37}, L123--L132. 

\item Aerts, D., \& D'Hooghe, B. (2009). Classical logical versus quantum conceptual thought: Examples in economy, decision theory and concept theory. {\it Quantum Interaction. Lecture Notes in Artificial Intelligence 5494}, 128--142.



\item Aerts, D., \& Gabora, L. (2005a). A theory of concepts and their combinations I: The structure of the sets of contexts and properties. {\it Kybernetes 34}, 167--191.

\item Aerts, D., \& Gabora, L. (2005b). A theory of concepts and their combinations II: A Hilbert space representation. {\it Kybernetes 34}, 192--221.

\item Aerts, D., Gabora, L., \& S. Sozzo, S. (2012). Concepts and their dynamics: A quantum--theoretic modeling of human thought. {\it Topics in Cognitive Science} (in print). \emph{ArXiv: 1206.1069 [cs.AI]}.

\item Aerts, D., \& Sozzo, S. (2011). Quantum structure in cognition. Why and how concepts are entangled. {\it Quantum Interaction. Lecture Notes in Computer Science 7052}, 116--127.


\item Alxatib, S., \& Pelletier, J. (2011). On the psychology of truth gaps. In Nouwen, R., van Rooij, R., Sauerland, U., \& Schmitz, H.-C. (Eds.), {\it Vagueness in Communication} (pp. 13--36). Berlin, Heidelberg: Springer-Verlag.




\item Beim Graben, P. et al. (2008). Language processing with dynamics fields. {\it Cognitive Neurodynamics 22}, 79--88.

\item Blutner, R., Pothos, E. M., \& Bruza, P. (2012). A quantum probability perspective on borderline vagueness. {\it Topics in Cognition Science} (in print).

\item Bonini, N., Osherson, D., Viale, R., \& Williamson, T. (1999). On the psychology of vague predicates. Mind and Language, 14, 377--393.

\item Bruner, J. (1990). {\it Acts of Meaning}. Cambridge, MA: Harvard University Press.

\item Bruza, P. D., Kitto, K., McEvoy, D., \& McEvoy, C. (2008). Entangling words and meaning. In {\it Proceedings of the Second Quantum Interaction Symposium}. Oxford: Oxford University, 118--124.

\item Bruza, P. D., Kitto, K., Nelson, D., \& McEvoy, C. (2009). Extracting spooky-activation-at-a-distance from considerations of entanglement. {\it Lecture Notes in Computer Science 5494}, 71--83.

\item Bruza, P. D., Lawless, W., van Rijsbergen, C. J., \& Sofge, D., Editors (2007). Proceedings of the AAAI Spring Symposium on Quantum Interaction, March 27--29. Stanford: Stanford University.

\item Bruza, P. D., Lawless, W., van Rijsbergen, C. J., \& Sofge, D., Editors (2008). Quantum Interaction: Proceedings of the Second Quantum Interaction Symposium. London: College Publications.

\item Bruza, P. D., Sofge, D., Lawless, W., van Rijsbergen, K., \& Klusch, M., Editors (2009). Proceedings of the Third Quantum Interaction Symposium. {\it Lecture Notes in Artificial Intelligence 5494}. Berlin: Springer.

\item Busemeyer, J. R., \& Bruza, P. D. (2012). {\it Quantum Models of Cognition and Decision}. Cambridge: Cambridge University Press. 

\item Busemeyer, J. R., \& Lambert-Mogiliansky, A. (2009). An exploration of type indeterminacy in strategic decision-making. {\it Lecture Notes in Computer Science 5494}, 113--127. Berlin: Springer.

\item Busemeyer, J. R., Wang, Z., \ Townsend, J. T. (2006). Quantum dynamics of human decision-making. {\it Journal of Mathematical Psychology 50}, 220--241.

\item Busemeyer, J. R., Pothos, E. M., Franco, R. \& Trueblood, J. S. (2011). A quantum theoretical explanation for probability judgment errors. {\it Psychological Review 118}, 193--218. 




\item Dirac, P. A. M. (1958). {\it Quantum mechanics}, 4th ed. London: Oxford University Press.

\item Fodor, J. (1994) Concepts: A potboiler. {\it Cognition 50}, 95--113.

\item Franco, R. (2009). The conjunctive fallacy and interference effects. {\it Journal of Mathematical Psychology 53}, 415--422.

\item Freud, S. (1899). {\it Die Traumdeutung}. Berlin: Fischer-Taschenbuch.


\item Gabora, L., \& Aerts, D. (2002). Contextualizing concepts using a mathematical generalization of the quantum formalism. {\it Journal of Experimental and Theoretical Artificial Intelligence 14}, 327--358.


\item Hampton, J. A. (1988a). Overextension of conjunctive concepts: Evidence for a unitary model for concept typicality and class inclusion. {\it Journal of Experimental Psychology: Learning, Memory, and Cognition 14}, 12--32.

\item Hampton, J. A. (1988b). Disjunction of natural concepts. {\it Memory \& Cognition 16}, 579--591.

\item Hampton, J. A. (1997). Conceptual combination: Conjunction and negation of natural concepts. {\it Memory \& Cognition 25}, 888--909.

\item James, W. (1910). {\it Some Problems of Philosophy}. Cambridge, MA: Harvard University Press.

\item Kamp, H., \& Partee, B. (1995). Prototype theory and compositionality. {\it Cognition 57}, 129--191. 

\item Komatsu, L. K. (1992). Recent views on conceptual structure. {Psychological Bulletin 112}, 500--526 (1992).


\item Khrennikov, A. Y. (2010). {\it Ubiquitous Quantum Structure}. Berlin: Springer.

\item Kolmogorov, A. N. (1933). {\it Grundbegriffe der Wahrscheinlichkeitrechnung}, Ergebnisse Der Mathematik; translated as {\it Foundations of Probability}. New York: Chelsea Publishing Company, 1950.

\item Lambert Mogiliansky, A., Zamir, S., \&  Zwirn, H. (2009). Type indeterminacy: A model of the KT(Kahneman-Tversky)-man. {\it Journal of Mathematical Psychology 53}, 349--36.


\item Melucci, M. (2008). A basis for information retrieval in context. {\it ACM Transactions of Information Systems 26}, 1--41.

\item Murphy, G. L., \& Medin, D. L. (1985). The role of theories in conceptual coherence. {\it Psychological Review 92}, 289-ֳ16.


\item Nosofsky, R. (1988). Exemplar-based accounts of relations between classification, recognition, and typicality. {\it Journal of Experimental Psychology: Learning, Memory, and Cognition 14}, 700֭708.

\item Nosofsky, R. (1992). Exemplars, prototypes, and similarity rules. In Healy, A., Kosslyn, S., \& Shiffrin, R. (Eds.), {\it From learning theory to connectionist theory: Essays in honor of William K. Estes}. Hillsdale NJ: Erlbaum.

\item Osherson, D., \& Smith, E. (1981). On the adequacy of prototype theory as a theory of concepts. {\it Cognition 9}, 35--58.

\item  Osherson, D. N., Smith,  E. (1982). Gradedness and Conceptual Combination. {\it Cognition 12}, 299--318.

\item Osherson, D. N., \& Smith, E. (1997). On typicality and vagueness. {\it Cognition 64}, 189--206.

\item Piaget, J. (1923). Le Langage et la Pens\'ee Chez l'Enfant. Paris: Delachaux et Niestl\'e.

\item Pitowsky, I. (1989). {\it Quantum Probability, Quantum Logic}. Lecture Notes in Physics vol. {\bf 321}.  Berlin: Springer.


\item Pothos, E. M., \& Busemeyer, J. R. (2009). A quantum probability explanation for violations of `rational' decision theory. {\it Proceedings of the Royal Society B 276}, 2171--2178.

\item Ripley, D. (2011). Contradictions at the borders. {\it Lecture Notes in Artificial Intelligence}, vol. 6517, 169--188.

\item Rips, L. J. (1995). The current status of research on concept combination. {\it Mind and Language 10}, 72--104.

\item Rosch, E. (1973). Natural categories, {\it Cognitive Psychology 4}, 328--350.

\item Rosch, E. (1978). Principles of categorization. In Rosch, E. \& Lloyd, B. (Eds.), {\it Cognition and categorization}. Hillsdale, NJ: Lawrence Erlbaum, pp. 133--179.

\item Rosch, E. (1983). Prototype classification and logical classification: The two systems. In Scholnick, E. K. (Ed.), {\it New trends in conceptual representation: Challenges to Piagetӳ theory?}. Hillsdale, NJ: Lawrence Erlbaum, pp. 133--159.

\item Rumelhart, D. E., \& Norman, D. A. (1988). Representation in memory. In Atkinson, R. C., Hernsein, R. J., Lindzey, G., \& Duncan, R. L. (Eds.), {\it StevensҠhandbook of experimental psychology}. Hoboken, New Jersey: John Wiley \& Sons.

\item Sauerland, U. (2010). Vagueness in language: The case against fuzzy logic revisited. Unpublished manuscript, Berlin, ZAS.

\item Sloman, S. E. (1996). The empirical case for two systems of reasoning. {\it Psychological Bulletin 119}, 3--22.

\item Song, D., Melucci, M., Frommholz, I., Zhang, P., Wang, L., \& Arafat, S., Editors (2011). Quantum Interaction. {\it Lecture Notes in Computer Science 7052}. Berlin: Springer.



\item Van Rijsbergen, K. (2004). {\it The Geometry of Information Retrieval}, Cambridge, UK: Cambridge University Press.


\item Wang, Z., Busemeyer, J. R., Atmanspacher, H., \& Pothos, E. (2012). The potential of quantum probability for modeling cognitive processes. {\it Topics in Cognitive Science} (in print).


\item Widdows, D. (2006). {\it Geometry and Meaning}, CSLI Publications, IL: University of Chicago Press.


\item Zadeh, L. (1965). Fuzzy sets. {\it Information \& Control 8}, 338--353 

\item Zadeh, L. (1982). A note on prototype theory and fuzzy sets. {\it Cognition 12}, 291--297.


\end{description}

\end{document}